\def\ojoin{\setbox0=\hbox{$\bowtie$}%
  \rule[-.02ex]{.25em}{.4pt}\llap{\rule[\ht0]{.25em}{.4pt}}}
\def\leftouterjoin{\mathbin{\ojoin\mkern-5.8mu\bowtie}}
\def\rightouterjoin{\mathbin{\bowtie\mkern-5.8mu\ojoin}}
\def\fullouterjoin{\mathbin{\ojoin\mkern-5.8mu\bowtie\mkern-5.8mu\ojoin}}
\newcommand{\jedi}{\textsf{JEDI~}}
\newcommand{\jfd}{\textsf{Join FD discovery~}}
\newcommand{\fdSet}[0]{\ensuremath{\mathcal{D}}}
\newcommand{\fun}[1]{\ensuremath{\mathrm{#1}}\xspace}
\newcommand{\compUpstaged}{\texttt{upstagedFDs}\xspace}
\newcommand{\subInfer}{\texttt{infer}\xspace}
\newcommand{\subRefine}{\texttt{refine}\xspace}
\newcommand{\add}{{\bf add}\xspace}
\newcommand{\prune}{{\bf prune}\xspace}
\newcommand{\leftInst}[0]{\ensuremath{L}\xspace}
\newcommand{\rightInst}[0]{\ensuremath{R}\xspace}
\newcommand{\leftSch}[0]{\ensuremath{\mathbf{S}}\xspace}
\newcommand{\rightSch}[0]{\ensuremath{\mathbf{T}}\xspace}
\newcommand{\leftJoinAtt}[0]{\ensuremath{X}\xspace}
\newcommand{\rightJoinAtt}[0]{\ensuremath{Y}\xspace}
\newtheorem{definition}{Definition}
\newtheorem{theorem}{Theorem}
\newtheorem{example}{Example}
\newcommand {\approxi}[1]{\rightharpoondown^{#1}}
\newcommand {\approxfd} {\approxi{\varepsilon}}
\begin{document}

\title{Discovering Multi-Table Functional Dependencies Without Full Join Computation}

\author{Ugo Comignani}
\affiliation{%
  \institution{Tyrex team, Univ. Grenoble Alpes, CNRS, Inria,Grenoble INP, LIG}
  \city{Grenoble}
  \country{France}
}
\email{ugo.comignani@inria.fr}

\author{Laure Berti-\'Equille}
\affiliation{%
  \institution{IRD, UMR ESPACE DEV, Univ de Toulon, AMU, CNRS, LIS, DIAMS}
  \city{Montpellier}
  \country{France}
}
\email{laure.berti@ird.fr}

\author{Noël Novelli}
\affiliation{%
  \institution{Aix Marseille Univ, Université de Toulon, CNRS, LIS, DIAMS}
  \city{Marseille}
  \country{France}
}
\email{noel.novelli@lis-lab.fr}

\setlength{\abovedisplayskip}{3pt}
\setlength{\belowdisplayskip}{3pt}

\begin{abstract}
In this paper, we study the problem of discovering \it{join FDs}, i.e., functional dependencies (FDs) that hold on multiple joined tables. We leverage logical inference, selective mining, and sampling and show that we can discover most of the exact join FDs from the single tables participating to the join and avoid the full computation of the join result. We propose algorithms to speed-up the join FD discovery process and mine FDs on the fly only from necessary data partitions. We introduce \jedi ({\it Join dEpendency DIscovery}), our solution to discover join FDs without computation of the full join beforehand. Our experiments on a range of real-world and synthetic data demonstrate the benefits of our method over existing FD discovery methods that need to precompute the join results before discovering the FDs. We show that the performance depends on the cardinalities and coverage of the join attribute values: for join operations with low coverage, \jedi  with selective mining outperforms the competing methods using the straightforward approach of full join computation by one order of magnitude in terms of runtime and can discover three-quarters of the exact join FDs using mainly logical inference in half of its total execution time on average. For higher join coverage, \jedi  with sampling reaches precision of 1 with only 63\% of the table input size on average.

\end{abstract}

\maketitle

\section{Introduction}

The problem of computing all functional dependencies (FDs) that hold on a given relational table has received much attention from the academia and the industry due to numerous application demands. FDs capture deterministic relations between attributes of a database. Typically, an FD $X \rightarrow Y$ with column sets $X$ and $Y$ in a given table expresses that the combination of values in $X$ columns uniquely determines the value of every column in $Y$. This can be of critical use for the database design process, table decomposition, database normalization, and for many data management applications, such as data cleaning \cite{Thirumuruganathan17}, data profiling \cite{ChuIP13}, and query optimization \cite{IlyasMHBA04, Paulley}.  Although FDs are essential for many aspects of database management, the problem of discovering FDs in an entire database with multiple joined tables has been considered so far only by computing FDs from single tables in isolation without much investigation on the effect of the join operation on FD discovery. Trivially, one can compute the join results between multiple tables beforehand and then run an existing method to discover FDs. However, such computation could be significantly reduced by inferring FDs from each separate table and computing the join result only when and with what is needed. In this paper, we address the problem of frugal computation of join FDs and rename it as the \textsf{Join FD discovery~} problem that aims at finding  FDs that hold on the join result of multiple relational tables without computing the full join operations.  

The \jfd problem is interesting for many reasons: (1) Join FDs can be used to assess whether the FDs discovered from a single table are meaningful for  the entire database when they still hold in the join results of multiple tables. Inversely, join FDs can be used to prune the set of FDs discovered from a single table when they are not persistent across the joined tables. In certain cases, they may hold incidentally on a single table due to some data errors or table incompleteness; (2) New FDs can be discovered through join operations and they can be useful for database administration, query optimization \cite{Paulley}, and other application purposes. Therefore, it is beneficial to  discover them efficiently; and  (3) Most join FDs can be inferred from the single tables participating to the join. This can save computation time since there is no need to compute the join result entirely before FD discovery. To address the \jfd problem, we combine three strategies in one approach: logical inference, selective mining (with vertical data partitions), and selective sampling (with horizontal partitions) for FD discovery and we propose {\textsf{JEDI}}, an efficient solution for automatically discovering multi-table join FDs with frugal computation, rather than the trivial and costly approach of computing the result of a join operation tables before FD discovery. 

 \begin{figure*}[t]
	\centering
	\includegraphics[width=.9\linewidth]{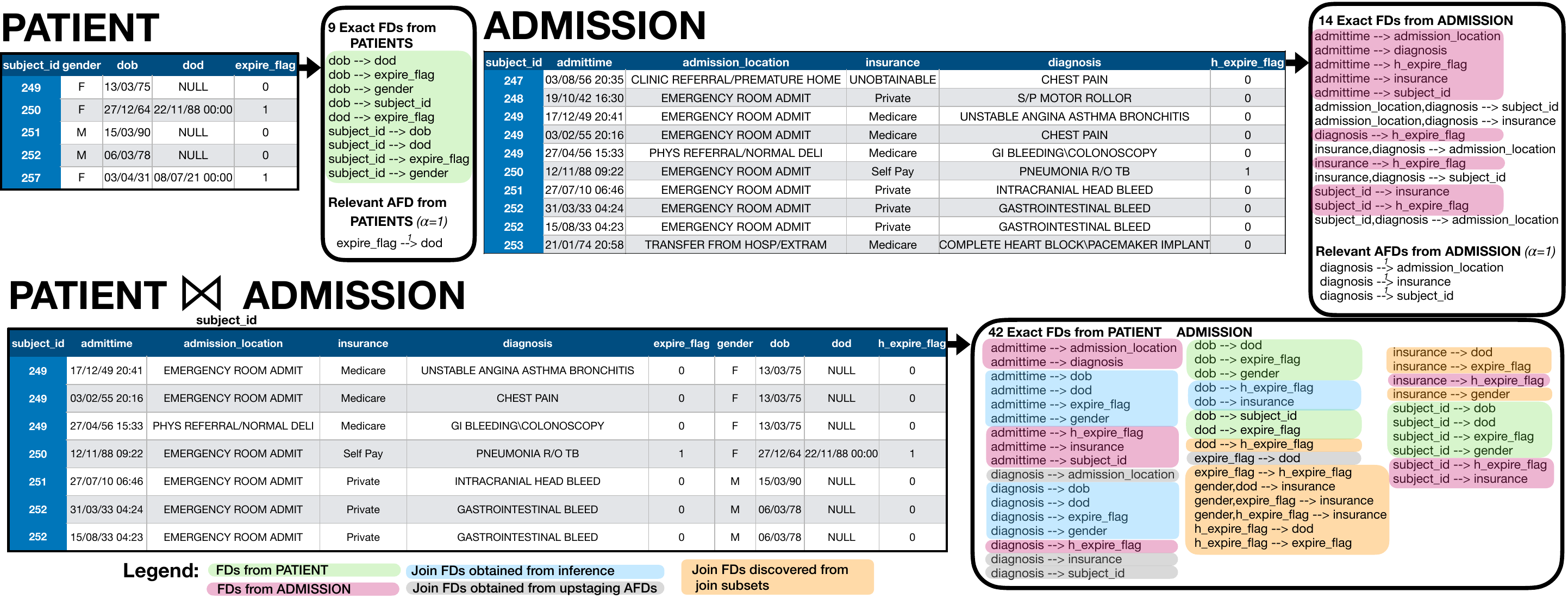}
	\caption{Illustrative Example}\label{fig:example}
\end{figure*} 

{\bf Challenges.} The first challenge is the exponential complexity of join FD discovery. The straightforward approach is very costly for two reasons: (1) the numbers of attributes of the tables participating in the join are added. Many of the existing methods exhibit poor scalability as the number of attributes increases (despite the efficiency of their pruning methods to search over the lattice of attribute combinations).  This problem is already well-known for FD discovery from a single table  \cite{Kruse18} but it is amplified in the join FD discovery problem; and (2) the execution times of the join and the FD discovery are also cumulated.

As a second challenge, real-world data sets are plagued with missing or erroneous values that may lead to spurious FDs or low recall in terms of  genuine FDs \cite{Berti-EquilleHN18}. More particularly, the problem of fully missing records has not been much investigated although it may have a significant impact on the validity of the FD discovery results. We can also observe disparate value cardinalities in the join attributes of the participating tables. The assumption of  {\it Preservation of Join Value Sets} \cite{booksGarcia}  is often violated due to many “dangling tuples”, that is, for $R\bowtie S$,  tuples of $R$ that join with no tuple of $S$ (and inversely). Moreover, some join attribute values  (e.g., keys) may be repeated and joined multiple times. Because of these disparities and partial coverage, the union of the FD sets obtained from single tables is not equivalent to the FD set obtained from the join result. %

Finally, pruning the join FD search space efficiently is challenging. Existing methods identify minimal FDs from a single table. However,    minimality does not guarantee that the set of discovered FDs will be parsimonious \cite{Kruse18,Zhang2020} and minimal FDs from single tables are not necessarily minimal multi-table FDs. In this paper, we propose the first approach that discover efficiently join FDs from multiple tables of a database. %

The main contributions of our paper are the following:
\begin{itemize}[noitemsep, nolistsep, leftmargin=*,topsep=0pt]
    \item We demonstrate several useful properties for inferring exact FDs that can be obtained from various join operators between two tables (i.e., inner join,  left and right semi-joins, left and right outer joins) without computing the full join result beforehand;
    \item We propose four algorithms that leverage these properties, and efficiently compute the exact FDs from two or more tables without  computing full joins;
    \item We propose {\textsf{JEDI}}, a system implementing our algorithms. \jedi is  available at \url{https://github.com/xxxx} with code, scripts, and data sets for the reproducibility of our experiments;
    \item We provide the results of an extensive experimental evaluation: we compare \jedi against three state-of-the-art FD discovery methods over a diverse array of real-world and synthetic data sets with varying numbers of tuples, attributes, domain sizes, types and coverage of the join operations. We find that \jedi outperforms the competing methods by one order of magnitude in terms of execution time for discovering exact FDs while preserving the smallest memory consumption on average.
\end{itemize}

\noindent {\bf Outline.} Section~\ref{sec:example} presents an illustrative example. Section presents the necessary background and notations. In Section~\ref{sec:preliminaries}, we formalize the \textsf{Join FD discovery~} problem and provide an overview of {\textsf{JEDI}}. In Section~\ref{sec:contributions}, we present our main contributions as the algorithms at the core of {\textsf{JEDI}}. We describe our performance experiments evaluating the efficiency and precision of {\textsf{JEDI}} in  Section~\ref{sec:experiments}. Finally, we discuss related work in Section~\ref{sec:relatedwork} and conclude in Section~\ref{sec:conclusion}.

\section{An Illustrative Example}
\label{sec:example}
To motivate our approach, we consider the following example illustrated in Figure 1. Samples of two tables are extracted from the clinical database \texttt{ MIMIC-III}\footnote{\url{https://physionet.org/content/mimiciii/1.4/}} \cite{mimiciii}:  {\small\texttt{PATIENT}} table contains information of 5 patients: their identifier ({\small\texttt{subject\_id}}), \texttt{gender}, date of birth ({\small\texttt{dob}}), date of death ({\small\texttt{dod}}), and \mbox{{\small\texttt{expire\_flag}}}.  {\small\texttt{ADMISSION}} table contains some administrative and clinical  information about each patient such as the hospital admission time ({\small\texttt{admittime}}), the admission location ({\small\texttt{admission\_location}}), the {\small\texttt{insurance}}, the {\small\texttt{diagnosis}}, and  {\small\texttt{h\_expire\_flag}} indicating whether the patient died at the hospital. The table contains 10 admission records. We can opt for the NULL semantics where missing values are all equal and extract exact FDs from each table. We obtain  9 exact FDs for   {\small\texttt{PATIENT}}  and 14 exact FDs for  {\small\texttt{ADMISSION}}.  The figure also shows relevant approximate FDs (with degree $\alpha=1$) from both tables. In addition, the figure presents the join result of the two tables over the join attribute {\small\texttt{subject\_id}}. %
In our example, %
 42 exact minimal and canonical FDs can be discovered from the join result. Interestingly, we can observe that the 9 exact FDs from the  {\small\texttt{PATIENT}} table in the left side of the join are preserved (in green), as well as 9 (out of the 14) exact FDs from the {\small\texttt{ADMISSION}} table (in pink). Additionally, 10 FDs discovered from the join result can be obtained by inference between the sets of exact FDs discovered respectively from each single table (in blue). For example,
${\small\texttt{admittime}}\to {\small\texttt{dob}}$ is obtained from ${\small\texttt{admittime}}\to  {\small\texttt{subject\_id}}$ in {\small\texttt{ADMISSION}} and  ${\small\texttt{subject\_id}} \to  {\small\texttt{dob}}$ in {\small\texttt{PATIENT}}. In the {\small\texttt{PATIENT}} table, the ${\small\texttt{subject\_id}}$ values are unique, whereas in the {\small\texttt{ADMISSION}} table, we can observe two repeated records for patients \#252 and \#249. Moreover, the patient \#257 is absent in the {\small\texttt{ADMISSION}} table, whereas s/he is present in the table \texttt{PATIENT}, similarly patient \#247 is absent in the {\small\texttt{PATIENT}} table but present {\small\texttt{ADMISSION}} table. Due to these cardinality disparities in the domain of the join attribute  {\small\texttt{subject\_id}}, some approximate FDs discovered from the single tables can become exact FDs in the join result  (we will use the term ``upstaged'' AFDs in the rest of the paper) as it is the case for 4 join FDs in the join result (in grey). The approximation degree $\alpha$ means that $\alpha$ tuples violate the exact FD. In the case of the AFD  ${\small\texttt{expire\_flag}} \approxi{1} {\small\texttt{dod}}$ in {\small\texttt{PATIENT}}  table, removing one tuple, either patient \#257 or patient \#250 will make the AFD become exact. When the two tables are joined, patient \#257 is removed due to the absence of the corresponding  ${\small\texttt{subject\_id}}$ value in  the {\small\texttt{ADMISSION}} table. The same phenomenon occurs for the 3 AFDs with approximation degree 1 discovered from  the {\small\texttt{ADMISSION}} table, namely:  ${\small\texttt{diagnosis}} \approxi{1} {\small\texttt{subject\_id}}$,  ${\small\texttt{diagnosis}} \approxi{1} {\small\texttt{insurance}}$,  and ${\small\texttt{diagnosis}} \approxi{1} {\small\texttt{admission\_location}}$. We note that the 5 remaining exact FDs from the \texttt{ADMISSION} table are reduced to minimal exact FDs in the set of FDs discovered from the join result. 
Finally, 10 exact FDs (in orange) that hold over the entire join result have to be discovered from the join result. However, if we join the two tables partially, only with the following combinations of tuples: [(\#249,\#252) or (\#249,\#251)] and [(\#250,\#251) or (\#250, \#252)], we can obtain the remaining 10 join FDs without the full join computation. 
Note that finding semantically correct FDs is orthogonal to our work and still an open problem for FD discovery from single tables: e.g., ${\small\texttt{dob}} \to {\small\texttt{dod}}$ discovered from  {\small\texttt{PATIENT}} and ${\small\texttt{admittime}} \to {\small\texttt{diagnosis}}$ from  {\small\texttt{ADMISSION}}, both also present in the join result, are not semantically correct. All FDs (from single or multiple tables) must be validated by domain experts before they can be used by downstream applications. Similarly,  AFDs can be either semantically incorrect (e.g., ${\small\texttt{diagnosis}} \to {\small\texttt{insurance}}$) or correct (e.g., ${\small\texttt{expire\_flag}} \to {\small\texttt{dod}}$) and  some can only surface as exact FDs in a join result due to errors in the single tables. Moreover, we would like to point out that these observations can be make regardless of the NULL semantics. A  better understanding of the mechanisms underlying the appearance of FDs in multi-table settings is hence needed. Based on these observations, the questions that motivated our work were the following: Instead of fully computing the join result before FD discovery, can we infer most of the FDs and accelerate the overall computation? Do we need to compute the full join anyway? How can we select the necessary tuples and attributes in a principled way?

The rest of the paper will attempt to answer these questions and propose efficient solutions as parts of our \jedi framework for discovering join FDs  from multiple relational tables.

\section{Preliminaries}
\label{sec:preliminaries}

Next, we recall the necessary definitions of FDs and join operators with their application to our problem.
\begin{definition}[Functional dependency satisfaction]
  Let $I$ be an instance over a relation $R$, 
  and $\bm{X}$, $\bm{Y}$ be two sets of attributes from $R$. $I$ satisfies a functional dependency $d: \bm{X} \to \bm{Y}$, denoted by  $I \models d$ if and only if:
  \begin{equation}
  \forall t_1,t_2\in I, t_1[\bm{X}] = t_2[\bm{X}] \Rightarrow t_1[\bm{Y}] = t_2[\bm{Y}].      
  \end{equation}
 \end{definition}
\begin{definition}[Logical implication between FDs]
  Let $R$ be a relation, 
  and $\bm{X}$, $\bm{Y}$, $\bm{Z}$ be three sets of attributes from $R$.
  Then a functional dependency $d: \bm{X} \to \bm{Z}$ logically implies a functional dependency $d^\prime: \bm{Y} \to \bm{Z}$  if, and only if, for every instance $I$ over $R$:
    \begin{equation}
    I \models d \text{ implies that } I \models d^\prime.
    \end{equation}

 \end{definition}
\begin{definition}[Approximate functional dependency] The approximate functional dependency (AFD), also called partial FD, denoted $a: \bm{X} \approxfd \bm{Y}$ is a functional dependency between $\bm{X}$ and $\bm{Y}$ that holds if and only if the minimum fraction of tuples  at most equal to $\varepsilon$ is removed from the relation  such that $\bm{X} \rightarrow \bm{Y}$  can hold. \\
The minimal fraction of tuples is computed as
\begin{equation}
e(\bm{X} \rightarrow \bm{Y})=1 - \sum_{c\in\pi_{\bm{X}}} \frac{\max\left\{c'| c' \in \pi_{\bm{X}\cup{\bm{Y}}} \text{and } c' \subseteq c\right\}}{|R|}
\end{equation}
with $\pi_{\bm{X}}$, the set of equivalence classes defining a partition of $R$ under $\bm{X}$, %
$\pi_{\bm{X}} = \left\{[t]_{\bm{X}} | t \in R\right\}$. The equivalence class of a tuple $t \in R$ with respect to a given set $\bm{X} \subseteq R$ is $[t]_{\bm{X}}=\left\{u \in R | t[A]=u[A] \forall A \in \bm{X}\right\}.$

\end{definition}

Next, we provide the definition of the \emph{natural join}. 
Let us consider \leftSch and \rightSch, two relations with at least one common attribute; \leftInst and \rightInst, two instances over the relations \leftSch and \rightSch, respectively;  $atts()$, a function taking an relational instance as input and returning its set of attributes. The set of common attributes between \leftSch and \rightSch is denoted  $X = atts(\leftInst) \bigcap atts(\rightInst)$. 

\begin{definition}[Natural join]
The natural join between \leftInst and \rightInst, denoted by $\leftInst \bowtie \rightInst$, is the instance such that
\setlist{nolistsep}
\begin{itemize}[noitemsep]
    \item $\forall t \in \leftInst \bowtie \rightInst, t\in (\leftInst\times \rightInst)$
    \item $\forall t \in \leftInst \bowtie \rightInst$ there exists two tuples $t_\leftInst\in\leftInst$ and $t_\rightInst\in\rightInst$
    such that $(\pi_{atts(\leftInst)} (t)=t_\leftInst) \wedge (\pi_{atts(\rightInst)} (t)=t_\rightInst) \wedge (\pi_{X} (t_\leftInst) = \pi_{X} (t_\rightInst)).$
    
    \end{itemize} 

\end{definition}
We recall the definitions of left and right semi-joins. 
\begin{definition}[Semi-join]
The left and right semi-joins between \leftInst and \rightInst, denoted by $\leftInst \ltimes \rightInst$ and $\leftInst \rtimes \rightInst$ respectively, are the instances such that $
    \leftInst \ltimes \rightInst = \pi_{atts(\leftInst)}(\leftInst \bowtie \rightInst)  $ %
    and $\leftInst \rtimes \rightInst = \pi_{atts(\rightInst)}(\leftInst \bowtie \rightInst).$

\end{definition}

In the case of relational instances without common attributes, the \emph{equi-join} operator can be used.  The \emph{equi-join} operator is defined as follows.
\begin{definition}[Equi-join]
Let $a\in atts(\leftInst)$ and $b\in atts(\rightInst)$ be two attributes. 
The equi-join between \leftInst and \rightInst on $a = b$, denoted by $\leftInst \bowtie_{a = b} \rightInst$, is the instance such that: 
    $\forall t \in \leftInst \bowtie_{a = b} \rightInst$, there exists two tuples $t_\leftInst\in\leftInst$ and $t_\rightInst\in\rightInst$
    such that: $\pi_{atts(\leftInst)} (t)=t_\leftInst \wedge \pi_{atts(\rightInst)} (t)=t_\rightInst \wedge \pi_{a} (t_\leftInst)\ =\ \pi_{b} (t_\rightInst).$
\end{definition}
Finally, we recall the definitions of left and right outer joins as follows.
\begin{definition}[Left outer join]
Let $\nu$ represent the null value. The left outer join between \leftInst and \rightInst, denoted by $\leftInst \leftouterjoin \rightInst$, is the instance such that $
    \leftInst \leftouterjoin \rightInst = \{ t |
                                                t\in (\leftInst \bowtie \rightInst) \cup (\leftInst\setminus \pi_{atts(\leftInst)}(\leftInst \bowtie \rightInst)) \times (\nu,\dots,\nu)
                                            \}.$
\end{definition}
\begin{definition}[Right outer join]
The right outer join between \leftInst and \rightInst, denoted by $\leftInst \rightouterjoin \rightInst$, is the instance such that: 
\begin{align*}
    \leftInst \rightouterjoin \rightInst = \{ t |
                                                t\in (\leftInst \bowtie \rightInst) \cup (\nu,\dots,\nu) \times (\rightInst\setminus \pi_{atts(\rightInst)}(\leftInst \bowtie \rightInst))
                                            \}.
\end{align*}
\end{definition}

From the previous definitions, we can define the notion of full outer join as follows.
\begin{definition}[Full outer join]
The right outer join between \leftInst and \rightInst, denoted by $\leftInst \fullouterjoin \rightInst$, is the instance such that:
\begin{align*}
    \leftInst \fullouterjoin \rightInst = (\leftInst \leftouterjoin \rightInst) \cup (\leftInst \rightouterjoin \rightInst).
\end{align*}
\end{definition}

 \begin{figure}[t]
	\centering
	\includegraphics[width=.9\linewidth]{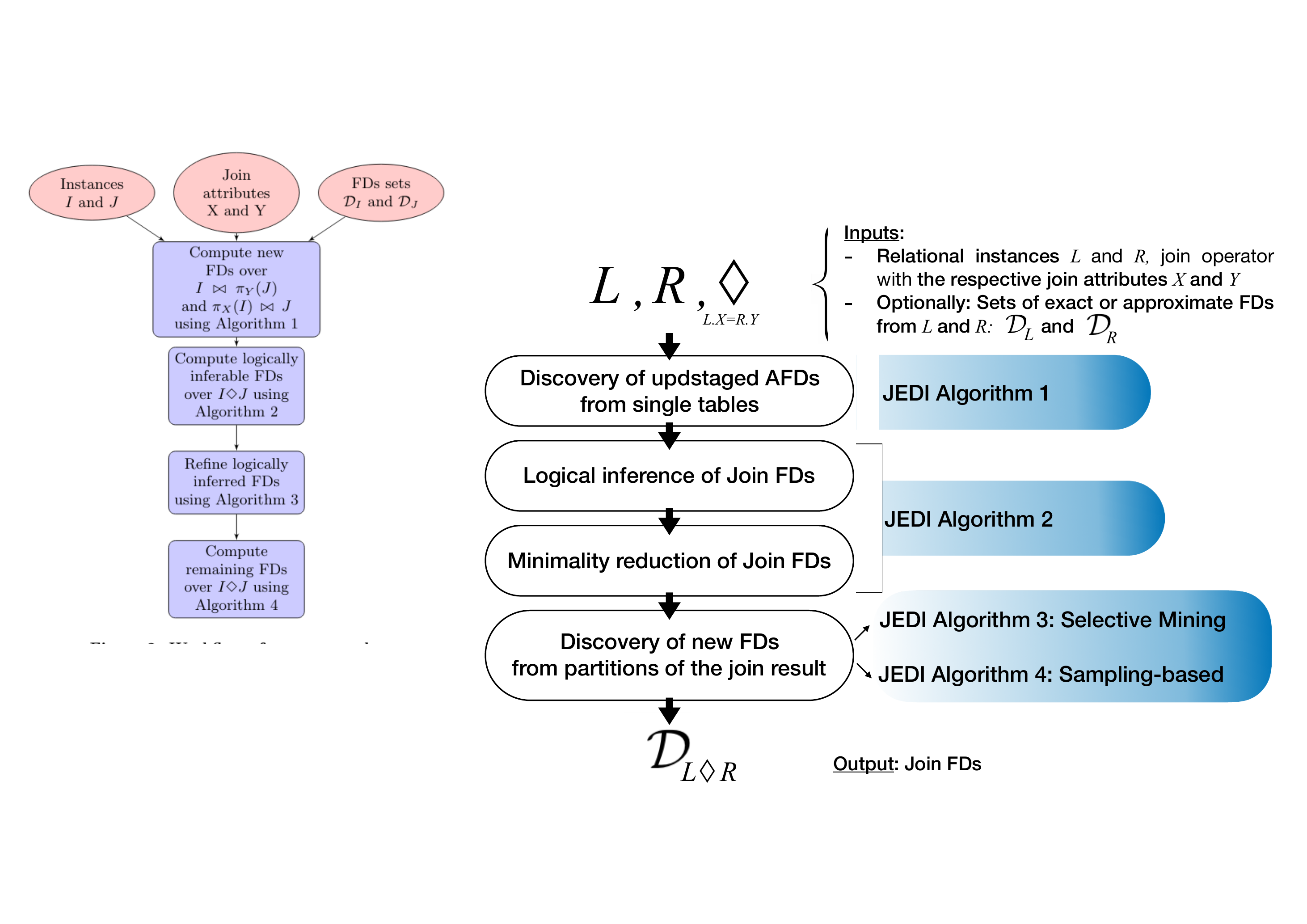}
	\caption{Workflow of \jedi for discovering join FDs from  $R \Diamond L$ with $\Diamond \in \{\bowtie;\ltimes;\rtimes;\fullouterjoin;\leftouterjoin;\rightouterjoin\}$}\label{fig:workflow}
\end{figure} 

\section{Efficient Discovery of Join FDs}
\label{sec:contributions}
In this section, we describe the problem we address and explain how we can infer and compute FDs that remain valid through a join operation with frugal computation.
First, we focus on the FDs discovered from single relations separately,  then we consider the FDs between attributes coming from the two (or more) joined relations %
and propose efficient methods to discover the exact FDs that cannot be inferred.

{\bf Problem statement}.\label{prob:generalJfds}
Let $\leftInst$ and $\rightInst$ be two instances of relations $\leftSch$ and $\rightSch$, respectively;  $\fdSet_\leftInst$ and $\fdSet_\rightInst$ the respective FD sets of the relational instances such that $\leftInst \models \fdSet_\leftInst$ and $\rightInst \models \fdSet_\rightInst$; $\Diamond \in \{\bowtie;\ltimes;\rtimes; \fullouterjoin ;\leftouterjoin ; \rightouterjoin\}$, a join operator;  and $\leftInst \Diamond_{\leftJoinAtt=\rightJoinAtt} \rightInst$ the result of joining $\leftInst$ and $\rightInst$ over the sets of attributes $\leftInst.\leftJoinAtt$ and $\rightInst.\rightJoinAtt$ with the operator $\Diamond$. The resolution of the \textsf{Join FD Discovery} problem aims at producing a set of functional dependencies, denoted $\fdSet_{\leftInst \Diamond_{\leftJoinAtt=\rightJoinAtt} \rightInst}$ such that:
\begin{equation}
    \fdSet_{\leftInst \Diamond_{\leftJoinAtt=\rightJoinAtt} \rightInst} = \fdSet^{new} \cup (\fdSet_{\leftInst} \cup \fdSet_{\rightInst}) \setminus \fdSet^{violated}
\end{equation}
where:
\begin{align}
    \fdSet^{new} &= \{d | (\leftInst \Diamond_{\leftJoinAtt=\rightJoinAtt} \rightInst \models d) \wedge (\leftInst \not\models d) \wedge (\rightInst \not\models d)\}\\[0.1cm]
    \fdSet^{violated} &= \{d \in \fdSet_{\leftInst} \cup \fdSet_{\rightInst} | \leftInst \Diamond_{\leftJoinAtt=\rightJoinAtt} \rightInst \not\models d \}
\end{align}

In the following theorem, we show that if the join operation is performed between instances then every FD over the initial instances stands over the joined instance, i.e., that $\fdSet^{violated}$ is empty.

\begin{theorem}%
  Let $\leftInst$ and $\rightInst$ be two instances over relations $\leftSch$ and $\rightSch$, respectively;  $\fdSet_\leftInst$ and $\fdSet_\rightInst$ be the two sets of all FDs such that $\leftInst \models \fdSet_\leftInst$ and $\rightInst \models \fdSet_\rightInst$, respectively. Then, we have:
  \begin{equation}
    \forall d \in \fdSet_\leftInst \cup \fdSet_\rightInst, {\leftInst \Diamond_{\leftJoinAtt=\rightJoinAtt} \rightInst} \models d
  \end{equation}
Equivalently: 
\begin{equation}
    {\leftInst \Diamond_{\leftJoinAtt=\rightJoinAtt} \rightInst} \models \fdSet_\leftInst \cup \fdSet_\rightInst
\end{equation}
\end{theorem}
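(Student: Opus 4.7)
The plan is to take an arbitrary FD $d: \bm{A} \to \bm{B} \in \fdSet_\leftInst \cup \fdSet_\rightInst$; without loss of generality assume $d \in \fdSet_\leftInst$, so $\bm{A}, \bm{B} \subseteq atts(\leftInst)$. I would then fix two tuples $t_1, t_2$ of $\leftInst \Diamond_{\leftJoinAtt = \rightJoinAtt} \rightInst$ with $t_1[\bm{A}] = t_2[\bm{A}]$ and aim to establish $t_1[\bm{B}] = t_2[\bm{B}]$ by a case analysis on the join operator $\Diamond$, reducing each case to the natural-join case.

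The natural-join case is the backbone. By the definition of $\bowtie$, each $t_i$ carries a witness $t_{i,\leftInst} \in \leftInst$ with $\pi_{atts(\leftInst)}(t_i) = t_{i,\leftInst}$. Because $\bm{A}, \bm{B} \subseteq atts(\leftInst)$, one has $t_i[\bm{A}] = t_{i,\leftInst}[\bm{A}]$ and $t_i[\bm{B}] = t_{i,\leftInst}[\bm{B}]$, so the hypothesis $t_1[\bm{A}] = t_2[\bm{A}]$ lifts to $t_{1,\leftInst}[\bm{A}] = t_{2,\leftInst}[\bm{A}]$; $\leftInst \models d$ then gives $t_{1,\leftInst}[\bm{B}] = t_{2,\leftInst}[\bm{B}]$ and hence $t_1[\bm{B}] = t_2[\bm{B}]$. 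The equi-join reduces to this by renaming the join keys. For the semi-joins, since $\leftInst \ltimes \rightInst = \pi_{atts(\leftInst)}(\leftInst \bowtie \rightInst)$ and $\bm{A} \cup \bm{B}$ lies inside the projection schema, the FD is preserved by the standard projection rule (and symmetrically for $\rtimes$, under the implicit restriction that a left-side $d$ is meaningful on $\rtimes$ only when $\bm{A},\bm{B}$ lie in the join attributes).

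For the outer joins, each tuple of $\leftInst \Diamond \rightInst$ is either an inner-join tuple (for which the previous argument applies verbatim) or a null-padded tuple. If both $t_1,t_2$ are null-padded on the left (the relevant subcase for $\rightouterjoin$ or $\fullouterjoin$ with a left-side $d$), their whole left projections are the constant $(\nu,\dots,\nu)$, so $t_1[\bm{B}] = t_2[\bm{B}]$ holds trivially. The delicate subcase is the mixed pair, where $t_1$ comes from $\leftInst \bowtie \rightInst$ and $t_2$ is null-padded on the left: the equality $t_1[\bm{A}] = t_2[\bm{A}]$ would force $t_{1,\leftInst}[\bm{A}] = (\nu,\dots,\nu)$, which I would dispense with by noting that $\nu$ in the outer-join definitions is a fresh symbol that does not appear in the original tuples of $\leftInst$ (or, equivalently, by adopting the convention that equalities involving $\nu$ do not fire for FD satisfaction). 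Symmetric reasoning covers a right-side $d$ under $\leftouterjoin$, and $\fullouterjoin$ follows by union of the two outer-join cases. The equivalent compact form $\leftInst \Diamond \rightInst \models \fdSet_\leftInst \cup \fdSet_\rightInst$ is then immediate from the definition of satisfying a set of FDs.

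The main obstacle I anticipate is precisely that mixed-type subcase under the outer-join operators: it is the only step that is not a mechanical unfolding of the definitions from Section~\ref{sec:preliminaries}, and it rests on a clean commitment to how the null marker is treated. Every other case reduces in one step to the witness-and-project skeleton of the natural-join argument, which is why the theorem's conclusion $\fdSet^{violated} = \emptyset$ is robust across all six join operators listed in the statement.
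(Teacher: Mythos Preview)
The paper does not supply a proof of this theorem; it states the result and appends only the remark that ``this theorem also holds in the presence of null values'' by treating each null as a constant. Your proposal is therefore strictly more detailed than anything the paper offers, and your witness-and-project skeleton for the natural/equi-join and semi-join cases is correct and is the standard argument one would expect.

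You are also right that the only non-mechanical step is the mixed-pair subcase under the outer-join operators, and your instinct to isolate it is sound. One point of friction worth noting: your two proposed resolutions---treating the padding symbol $\nu$ as fresh, or adopting the SQL-style convention that equalities involving $\nu$ do not fire---are both valid ways to close the case, but neither coincides with the paper's stated nulls-equal convention (``all null values can be handled as one single constant value''). Under that literal reading, a tuple of $\leftInst$ whose $\bm{A}$-projection is already all-$\nu$ can collide with a null-padded tuple and violate $\bm{A}\to\bm{B}$ in $\leftInst \rightouterjoin \rightInst$; the paper glosses over this. Your proof is more honest in making the required assumption explicit. If you want to align with the paper, simply record the freshness hypothesis (the padding marker is distinct from any constant already present in $\leftInst$ or $\rightInst$) as part of the outer-join reading; that is the tacit convention behind the definitions in Section~\ref{sec:preliminaries}, and with it your case analysis goes through cleanly.
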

Note that this theorem also holds in the presence of null values. Given the semantics of null values, every null value can be considered as a particular constant value (i.e., when null values are considered as identical,  all null values can be handled as one single constant value). In the rest of paper, we apply our algorithms regardless of the null semantics.

{\bf Our Solution. } 
In order to compute the set of FDs over $\leftInst \Diamond_{\leftJoinAtt=\rightJoinAtt} \rightInst$, we propose the workflow illustrated in Figure~\ref{fig:workflow}.
It consists in three steps corresponding to: (1) Discovery of approximate single-table FDs that are upstaged and become exact FDs via the join operation (\jedi step 1); (2) Logical inference and minimality reduction of join FDs (\jedi step 2); and (3) Computation of the remaining join FDs from partial join over selective mining or sampling (\jedi step 3). The steps are detailed in the next sections.

\subsection{From AFDs to Exact FDs Through Join}
New FDs may appear mechanically due to the join operation when tuples from one table cannot be joined with their counterpart in the other table, i.e., when some join attribute values are missing in one of the tables.  In certain cases, FDs that were approximate in a single table become exact in the join result. This mechanism is expressed more formally in the following theorem:

\begin{theorem}[Join FDs from Upstaged AFDs]
 Let $\leftInst$ and $\rightInst$ be two instances over relations $\leftSch$ and $\rightSch$, respectively, and $\fdSet_\leftInst$ and $\fdSet_\rightInst$ be the two sets of all FDs such that $\leftInst \models \fdSet_\leftInst$ and $\rightInst \models \fdSet_\rightInst$, respectively. Then the sets of upstaged FDs denoted $\fdSet^{new}_{{\leftInst}}$ and $\fdSet^{new}_{{\rightInst}}$ are the sets:
  \begin{align}
      \fdSet^{new}_{{\leftInst}} &= \{ d\ |\ d\not\in\fdSet_{\leftInst} \wedge \left(\leftInst \Diamond_{\leftJoinAtt=\rightJoinAtt} (\pi_{\rightJoinAtt}(\rightInst))\models d \right)\} \\
    \fdSet^{new}_{{\rightInst}} &= \{ d\ |\ d\not\in\fdSet_{\rightInst} \wedge \left(\pi_{\leftJoinAtt}(\leftInst) \Diamond_{\leftJoinAtt=\rightJoinAtt} \rightInst)\models d \right)\}
  \end{align}
\end{theorem}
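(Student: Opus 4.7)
The plan is to show that the two defining conditions for $\fdSet^{new}_\leftInst$ (namely $d \notin \fdSet_\leftInst$ and $\leftInst \Diamond_{\leftJoinAtt=\rightJoinAtt} \pi_\rightJoinAtt(\rightInst) \models d$) characterize exactly the ``upstaged'' FDs, i.e., those FDs over $atts(\leftInst)$ that fail on $\leftInst$ alone but hold on the full join $\leftInst \Diamond_{\leftJoinAtt=\rightJoinAtt} \rightInst$. The argument for $\fdSet^{new}_\rightInst$ is symmetric, so I would work out only the left-hand side. By Theorem~1 every $d \in \fdSet_\leftInst$ already holds on the full join, so the non-trivial task reduces to establishing the following equivalence for any $d : \bm{X} \to \bm{Y}$ with $\bm{X}\cup\bm{Y}\subseteq atts(\leftInst)$:
\begin{equation*}
\leftInst \Diamond_{\leftJoinAtt=\rightJoinAtt} \rightInst \models d \;\Longleftrightarrow\; \leftInst \Diamond_{\leftJoinAtt=\rightJoinAtt} \pi_\rightJoinAtt(\rightInst) \models d.
\end{equation*}

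The main step is a reduction to a set-equality on projections. I would first recall that FD satisfaction is insensitive to duplicates and to attributes outside $\bm{X}\cup\bm{Y}$: $I \models d$ iff the set $\pi_{atts(\leftInst)}(I)$ has no two tuples agreeing on $\bm{X}$ yet disagreeing on $\bm{Y}$. So the equivalence above will follow once I prove the set-level identity
\begin{equation*}
\pi_{atts(\leftInst)}\bigl(\leftInst \Diamond_{\leftJoinAtt=\rightJoinAtt} \rightInst\bigr) \;=\; \pi_{atts(\leftInst)}\bigl(\leftInst \Diamond_{\leftJoinAtt=\rightJoinAtt} \pi_\rightJoinAtt(\rightInst)\bigr).
\end{equation*}
For $\bowtie$ and the equi-join, a tuple $t_L \in \leftInst$ contributes to the left-hand projection iff there exists $t_R \in \rightInst$ with $t_L[\leftJoinAtt] = t_R[\rightJoinAtt]$, which in turn is equivalent to $t_L[\leftJoinAtt] \in \pi_\rightJoinAtt(\rightInst)$, i.e., to $t_L$ contributing to the right-hand projection. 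The semi-join case is immediate from the definition $\leftInst \ltimes \rightInst = \pi_{atts(\leftInst)}(\leftInst \bowtie \rightInst)$. For the outer-join operators, the matched portion is handled as above, and the null-padded portion coming from unmatched $\leftInst$-tuples depends only on which $\leftInst$-tuples have their $\leftJoinAtt$-value missing from $\pi_\rightJoinAtt(\rightInst)$, a set that is the same on both sides; unmatched $\rightInst$-tuples project on $atts(\leftInst)$ to an all-null row that is present on both sides whenever any unmatched $\rightInst$-tuple exists.

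The main obstacle I anticipate is the outer-join analysis: I must verify that collapsing duplicate join-key values in $\rightInst$ via $\pi_\rightJoinAtt$ neither gains nor loses any tuple in the projection onto $atts(\leftInst)$. The observation making this work is that ``being unmatched'' is a property of the \emph{set} of $\rightJoinAtt$-values occurring in $\rightInst$, not of their multiplicities, so the partition of $\leftInst$ into matched vs.\ unmatched tuples is preserved when replacing $\rightInst$ with $\pi_\rightJoinAtt(\rightInst)$. Once the set-equality is established, combining it with the FD-is-projection-insensitive observation and Theorem~1 yields the equivalence stated above, and the characterisation of $\fdSet^{new}_\leftInst$ follows; swapping the roles of $\leftInst$ and $\rightInst$ then gives the corresponding statement for $\fdSet^{new}_\rightInst$.
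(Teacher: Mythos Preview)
The paper states this theorem without proof; it is presented as a formal characterisation of the ``upstaged'' FDs, with only an example and Algorithm~1 following it. So there is no paper proof to compare against.

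Your argument is sound and supplies exactly the justification the paper omits. The reduction to the set-level identity
\[
\pi_{atts(\leftInst)}\bigl(\leftInst \Diamond_{\leftJoinAtt=\rightJoinAtt} \rightInst\bigr)
\;=\;
\pi_{atts(\leftInst)}\bigl(\leftInst \Diamond_{\leftJoinAtt=\rightJoinAtt} \pi_{\rightJoinAtt}(\rightInst)\bigr)
\]
is the right move, and your case analysis correctly isolates the single relevant fact: whether an $\leftInst$-tuple is matched (and, for the outer variants, whether the null-padded row appears) depends only on the \emph{set} $\pi_{\rightJoinAtt}(\rightInst)$ of join-key values, not on multiplicities or on the remaining $\rightInst$-attributes. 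Combined with the observation that FD satisfaction for a dependency over $atts(\leftInst)$ is determined by the projection onto $atts(\leftInst)$ viewed as a set, the equivalence follows, and with Theorem~1 you obtain the stated characterisation of $\fdSet^{new}_{\leftInst}$.

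One small point worth making explicit for completeness: for the right semi-join $\rtimes$, the result carries only $atts(\rightInst)$, so the $\fdSet^{new}_{\leftInst}$ side is degenerate and the content lies entirely in the symmetric $\fdSet^{new}_{\rightInst}$ case you defer to. This is a feature of the operator, not a gap in your argument, but a one-line remark would round off the case split.
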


\begin{example}
To illustrate the case of upstaged AFDs using the example of Figure 1, let us consider the NULL semantics where all null values are identical. The approximate FD ${\small\texttt{expire\_flag}}  \approxi{1} {\small\texttt{dod}}$ in table {\small\texttt{PATIENT}}  has an approximation degree of 1 as only one patient (\#257) violates the FD. However, in the join result of ${\small\texttt{PATIENT}} \bowtie_{{\small\texttt{subject\_id}}} {\small\texttt{ADMISSION}}$, the violating tuple \#257 has no counterpart in the {\small\texttt{ADMISSION}} table and it disappears from the join result. Consequently, the FD  ${\small\texttt{expire\_flag}} \to {\small\texttt{dod}}$ becomes exact in the join result. 

\begin{algorithm}[!h]
\SetAlgoLined
\KwInput{$\leftInst$ and $\rightInst$, two relational instances;\\ 
\hspace{1.1cm}$\leftJoinAtt$ and $\rightJoinAtt$, the sets of join attributes for $\leftInst$ and\\ \hspace{1.3cm} $\rightInst$ respectively; \\ 
\hspace{1.1cm}$\Diamond \in \{\bowtie;\ltimes;\rtimes;\fullouterjoin;\leftouterjoin;\rightouterjoin\}$ the join operator. \\
 {\bf Optional Input :} $\fdSet_{\leftInst}$ and $\fdSet_{\rightInst}$, the sets of AFDs or exact FDs over $\leftInst$ and $\rightInst$ respectively }
\KwResult{the FD sets $\fdSet^{up}_{\leftInst}$ and $\fdSet^{up}_{\rightInst}$}

\BlankLine
\SetKwFunction{procFDs}{upstagedFDs}
\SetKwFunction{procAFDs}{upstagedAFDs}
$\fdSet^{up}_{\leftInst}, \fdSet^{up}_{\rightInst} \gets \emptyset$ \;

\For{each pair of instances $(i,j)\in\{(L,R);(R,L)\}$}
{
    \uIf{ $\fdSet_{i}$ is not provided}
    { 
         $\fdSet_{i} \gets \texttt{computeFDs}(i) $ \;
    } 
    
    \uIf{ $\fdSet_{i}$ contains AFDs}
    {
        $\fdSet^{up}_{i} \gets$
        \procAFDs{$i$, $j$, $X$, $Y$, $\fdSet_{i}$} \; 
    } 
    
    \Else{
     $\fdSet^{up}_{i} \gets$
    \procFDs{$i$, $j$, $X$, $Y$, $\fdSet_{i}$}; } 
}  
\KwRet{($\fdSet^{up}_{\leftInst}$,$\fdSet^{up}_{\rightInst}$)}\;
\SetKwProg{myproc}{Subroutine}{}{}
\myproc{\procFDs{$I$,$J$,$X$,$Y$,$\fdSet$}}{
$\fdSet_{out} \gets \emptyset$\;
 $I_{join} \gets I \Diamond_{X=Y} (\pi_{Y}(J))$\;\label{line:joinLeft}
\If{$\fun{size}(I_{join}) < \fun{size}(I)$\label{line:sizeLeft}} 
    {
    $\fdSet_{cand} \gets$ generate candidate FDs for first level of $I_{join}$\;
    \Repeat{$\fdSet_{cand} = \emptyset$}
    {
         \prune FDs in $\fdSet_{cand}$ logically implied by FDs in $\fdSet_{out}$\;\label{line:pruneTane} %
        \prune FDs in $\fdSet_{cand}$ logically implied by FDs in $\fdSet$\;\label{line:pruneAlreadyKnown}  %
         \add to $\fdSet_{out}$ the FDs from $\fdSet_{cand}$ holding in $I$\;
        $\fdSet_{cand} \gets$ generate candidate FDs for next level\;
    }
}
\KwRet $\fdSet_{out}$}
\BlankLine
\SetKwProg{myproc}{Subroutine}{}{}
\myproc{\procAFDs{$I$,$J$,$X$,$Y$,$\fdSet$}}{
$\fdSet_{out} \gets \emptyset$\;
\For{$d \in \fdSet$}
{
    $V \gets$ get violating tuples for $d$ in $I$ \;
    \If{$V \Diamond_{X=Y} (\pi_{Y}(J)) = \emptyset$\label{line:pruneAFDs}}
    {
       \add $d$ to $\fdSet_{out}$\;
    }
}
\prune FDs in $\fdSet_{out}$ implied by each others  %
\KwRet $\fdSet_{out}$}
\caption{Find Join FDs from Upstaged AFDs}
\label{alg:filteringFDs}
\end{algorithm}

\end{example}

To compute join FDs from upstaged AFDs, we propose Algorithm~\ref{alg:filteringFDs}. Lines 2--10 handle the inputs of the user if s/he can provide FDs and/or AFDs for each table participating in the join operation. If not, exact FDs are computed from each single table. For each side of the join, if exact FDs have been provided, the subroutine \compUpstaged is executed (lines\#~8 and 12) and computes partially the join only with the join attributes from the left side table (line\#~\ref{line:joinLeft}) to  check the assumption of the join value set preservation \cite{booksGarcia}. If the assumption is violated (i.e., if some tuples have been deleted through the join operation (line\#~\ref{line:sizeLeft})), some upstaged join FDs are produced. 
The subroutine discovers the FDs in the input instance by taking into account the previously discovered FDs and improves the pruning (line\#~\ref{line:pruneAlreadyKnown} of \compUpstaged). On the other hand, if AFDs are provided as inputs of Algorithm~\ref{alg:filteringFDs} (lines \#~5--6 and 25), the subroutine \texttt{upstagedAFDs}  will check for each AFD if the join of its set of violating tuples with the instance from the other side of the join leads to an empty instance (line\#~\ref{line:pruneAFDs}). In this case, the AFD becomes exact in the joined instance and thus is added to the output set of exact FDs.  %
In this algorithm, the computation is performed over one table at a time, and not over the complete join result. Thus, the FD discovery focuses on FDs whose attributes belong to one joined table only. Next, we discover the FDs containing attributes from both instances by relying on the characteristics of the join and on the FDs discovered previously. 

\subsection{Exact FDs from Joined Tables}

Other join FDs are FDs containing attributes from the result of a join operation between multiple tables. Compared to the previous join FDs from upstaged AFDs, they will include a mix of attributes coming from each table participating in the join. Their definition is formalized as follows. 

\begin{definition}%
Let $\leftInst$ and $\rightInst$ be two instances over relations $\leftSch$ and $\rightSch$, respectively. 
  An FD $d$ is said to be multi-table and specific to the joined instance $\leftInst \Diamond_{\leftJoinAtt=\rightJoinAtt} \rightInst$ if:
      $d$ holds in $\leftInst \Diamond_{\leftJoinAtt=\rightJoinAtt} \rightInst$; and 
      $d$ contains at least one attribute occurring in $atts(\leftInst)\setminus \leftJoinAtt$ and one attribute occurring in $atts(\rightInst)\setminus \rightJoinAtt$.
\end{definition}
\begin{example}In our example of Figure 1, FDs specific to the joined result are highlighted in orange.
For example,  {\small\texttt{gender, h\_expire\_flag}}$\to$ {\small\texttt{insurance}} is specific to the join of {\small\texttt{Patient}} and {\small\texttt{Admission}}. It holds in {\small\texttt{PATIENT}} $\bowtie$ {\small\texttt{ADMISSION}}. Attributes {\small\texttt{gender}} and {\small\texttt{expire\_flag}} come from  {\small\texttt{PATIENT}} and attribute {\small\texttt{insurance}} comes from {\small\texttt{ADMISSION}}.
\end{example}

Next, we consider the case of FDs that have all attributes in their left-hand side coming from only one relational instance and we prove several interesting properties about these logically inferable FDs.
Then, we examine the properties of multi-table join FDs with left-hand side attributes obtained from both initial instances.

\subsubsection{Logically inferable Join FDs}
We will now show that the join FDs with left-hand side (\texttt{lhs}) attributes coming from only one initial instance can be deduced from the sets of FDs over the initial instances. 
To prove the theorem, we first state the following lemma showing that an FD with \texttt{lhs} attributes coming from only one single instance cannot exist if their right-hand side (\texttt{rhs}) is not functionally defined by the set of join attributes.

\begin{theorem}\label{thm:notModelsImplication}
Let $\leftInst$ and $\rightInst$ be two instances over relations $\leftSch$ and $\rightSch$, respectively. Let $\leftInst \Diamond_{\leftJoinAtt=\rightJoinAtt} \rightInst$ be a join result with $\leftJoinAtt \subseteq atts(\leftInst)$, $\rightJoinAtt  \subseteq atts(\rightInst)$. For all $A  \subseteq atts(\leftInst)\setminus \leftJoinAtt$ and $B \subseteq atts(\rightInst)\setminus \rightJoinAtt$:
$$\text{if }\leftInst \Diamond_{\leftJoinAtt=\rightJoinAtt} \rightInst \not\models \leftJoinAtt \to B \text{ then } \leftInst \Diamond_{\leftJoinAtt=\rightJoinAtt} \rightInst \not\models A \to B $$
\end{theorem}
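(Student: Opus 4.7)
I would prove the contrapositive: assuming $\leftInst \Diamond_{\leftJoinAtt=\rightJoinAtt} \rightInst \models A\to B$, I will derive $\leftInst \Diamond_{\leftJoinAtt=\rightJoinAtt} \rightInst \models \leftJoinAtt \to B$. The guiding intuition is that $A$ lies entirely on the \leftInst-side while $B$ lies entirely on the \rightInst-side; two joined tuples that agree on the join key $\leftJoinAtt$ can be ``recombined'' into a fresh join tuple that shares its \leftInst-part with one and its \rightInst-part with the other, and the hypothesis $A\to B$ applied across this bridging tuple then forces the two original tuples to agree on $B$.

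\textbf{Main step (inner join case).} Take any $t_1, t_2 \in \leftInst \bowtie_{\leftJoinAtt=\rightJoinAtt} \rightInst$ with $t_1[\leftJoinAtt] = t_2[\leftJoinAtt]$. By the definition of the (equi/natural) join, each $t_i$ decomposes as a pair $(s_i, r_i)$ with $s_i \in \leftInst$, $r_i \in \rightInst$, and $s_i[\leftJoinAtt]=r_i[\rightJoinAtt]$. From $t_1[\leftJoinAtt]=t_2[\leftJoinAtt]$ we get $s_1[\leftJoinAtt] = r_2[\rightJoinAtt]$, so the pair $t_3 := (s_1, r_2)$ is also a valid join tuple. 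Since $A \subseteq atts(\leftInst)\setminus \leftJoinAtt$, tuples $t_1$ and $t_3$ share their entire \leftInst-part and in particular $t_1[A] = t_3[A]$. Applying $A\to B$ yields $t_1[B]=t_3[B]$, and because $B \subseteq atts(\rightInst)\setminus \rightJoinAtt$ we have $t_3[B]=r_2[B]=t_2[B]$. Hence $t_1[B]=t_2[B]$, establishing $\leftJoinAtt \to B$.

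\textbf{Extending to the other join operators.} For the outer joins, I would split on whether each of $t_1,t_2$ is a ``matched'' tuple or a ``dangling'' one padded with $\nu$ on the missing side. If both are matched, the argument above applies verbatim. If both are \rightInst-dangling in $\leftInst \leftouterjoin \rightInst$ (or symmetrically), then $t_1[B]=t_2[B]=(\nu,\dots,\nu)$ trivially. The mixed case cannot occur while preserving $t_1[\leftJoinAtt]=t_2[\leftJoinAtt]$: if $t_1$ is matched via some $r_1\in\rightInst$ with $r_1[\rightJoinAtt]=s_1[\leftJoinAtt]=s_2[\leftJoinAtt]$, then $s_2$ also has a join partner in \rightInst and therefore does not appear as a dangling tuple. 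The semi-join cases $\ltimes, \rtimes$ are degenerate: one of $A$ or $B$ becomes empty after the projection defining the operator, making the implication vacuously true.

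\textbf{Expected obstacle.} The main subtlety I anticipate is bookkeeping for the outer-join variants, specifically justifying that a dangling tuple on one side cannot share its $\leftJoinAtt$-value with a matched tuple. Writing this carefully requires invoking the precise set-theoretic definitions of $\leftouterjoin, \rightouterjoin, \fullouterjoin$ given earlier, and noting (as the paper already does) that the argument is insensitive to the NULL semantics chosen, since $\nu$ can be treated as just another constant for the purpose of equality comparisons. The inner-join core of the argument, which is the non-trivial part, is just the recombination step $t_3=(s_1,r_2)$.
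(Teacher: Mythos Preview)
Your proof is correct and uses essentially the same recombination idea as the paper's sketch: the paper argues directly (from a violation of $\leftJoinAtt\to B$ it picks a single $t_\leftInst$ and joins it with both witnessing $\rightInst$-tuples to violate $A\to B$), while you argue the contrapositive (forming the bridging tuple $t_3=(s_1,r_2)$ and applying $A\to B$ across it); the underlying step---that an $\leftInst$-tuple sharing the join key with an $\rightInst$-tuple can be freely recombined into a new join tuple---is identical in both. Your treatment actually goes further than the paper's sketch, which only handles the inner-join case and does not spell out the dangling-tuple analysis for $\leftouterjoin,\rightouterjoin,\fullouterjoin$ or the degenerate semi-join cases that you outline.
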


\begin{proof}[sketch]
Let $x_1,\dots,x_n$ being values over the attributes in $\leftJoinAtt$, and $b_1,\dots,b_m,b^\prime_1,\dots,b^\prime_m$ being values over the attributes in $B$. If $\leftInst \Diamond_{\leftJoinAtt=\rightJoinAtt} \rightInst \not\models \leftJoinAtt \to B $ then there exist two tuples:
\begin{align*}
    &t_\rightInst(x_1,\dots,x_n,b_1,\dots,b_m,\dots)\\
    &t_\rightInst^\prime(x_1,\dots,x_n,b^\prime_1,\dots,b^\prime_m,\dots)
\end{align*}

in $\rightInst$ such that:
    $\exists i \in [1,\dots,m], b_i \neq b^\prime_i$ and
  there exists a tuple $t_\leftInst(x_1,\dots,x_n,a_1,\dots,a_k,\dots)$ in $\leftInst$ with $a_1,\dots,a_k$, the values of the attributes in $A$ (otherwise, tuples $t_\rightInst$ and $t_\rightInst^\prime$ would have been filtered during the join operation).
Thus, the join $\leftInst \Diamond_{\leftJoinAtt=\rightJoinAtt} \rightInst$  leads to the two tuples:
\[
\begin{split}
    t(x_1,\dots,x_n,a_1,\dots,a_k,b_1,\dots,b_m,\dots)\\ t^\prime(x_1,\dots,x_n,a_1,\dots,a_k,b^\prime_1,\dots,b^\prime_m,\dots)
\end{split}
\]
which violate the FD $A \to B$ and 
$\leftInst \Diamond_{\leftJoinAtt=\rightJoinAtt} \rightInst \not\models A \to B $\end{proof}

\begin{example}
    To illustrate the property proved in Theorem~\ref{thm:notModelsImplication}, we observe that the diagnosis is not determined by the patient identifier in Figure 1, for example patient \#249 has been admitted three times for a different pathology each time, i.e. : ${\small\texttt{PATIENT}}\bowtie_{ {\small\texttt{subject\_id}}} {\small\texttt{ADMISSION}} \not\models  {\small\texttt{subject\_id}} \to  {\small\texttt{diagnosis}}$. 
    From Theorem~\ref{thm:notModelsImplication}, we know that {\small\texttt{diagnosis}} in the join result ${\small\texttt{PATIENT}}\bowtie_{{\small\texttt{subject\_id}}} {\small\texttt{ADMISSION}}$ cannot be determined by any set of attributes coming from \texttt{PATIENT} table. Such similar inferences may be trivial for the user, but they usually require the knowledge of the attribute semantics. If not encoded, they are difficult  to capture by a system. However, the property proved in Lemma~\ref{thm:notModelsImplication} can be used to drastically reduce the set of possible FDs that can appear after a join operation.
\end{example}

From our proof of Theorem~\ref{thm:notModelsImplication},  we can characterize a subset of the set of FDs with \texttt{lhs} attributes coming from only one initial instance and that hold in the joined instance.
\begin{theorem}\label{thm:MonoInstanceCrossJoinFDs}
  Let $\leftInst$ and $\rightInst$ be two instances over relations $\leftSch$ and $\rightSch$, respectively. 
Let $\leftInst \Diamond_{\leftJoinAtt=\rightJoinAtt} \rightInst$ be a join result with $\leftJoinAtt \subseteq atts(\leftInst)$, $\rightJoinAtt  \subseteq atts(\rightInst)$. 
For all $A  \subseteq atts(\leftInst)\setminus \leftJoinAtt$ and $B \subseteq atts(\rightInst)\setminus \rightJoinAtt$,\\
 If $\leftInst \Diamond_{\leftJoinAtt=\rightJoinAtt} \rightInst \models A \to \leftJoinAtt \wedge  \leftInst \Diamond_{\leftJoinAtt=\rightJoinAtt} \rightInst \models \leftJoinAtt \to B$,\\
 Then $\leftInst \Diamond_{\leftJoinAtt=\rightJoinAtt} \rightInst \models A \to B$.
\end{theorem}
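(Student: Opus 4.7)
The statement is essentially Armstrong's transitivity axiom applied to the particular relation $\leftInst \Diamond_{\leftJoinAtt=\rightJoinAtt} \rightInst$, so my plan is to give a direct tuple-level argument rather than invoke any structural property of the join. The key observation is that once the join has been materialised (under any operator in the set $\{\bowtie,\ltimes,\rtimes,\fullouterjoin,\leftouterjoin,\rightouterjoin\}$) the result is simply a relational instance, and FD satisfaction on it is determined purely by pairwise tuple comparisons.

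My plan is as follows. First, I would pick two arbitrary tuples $t_1, t_2 \in \leftInst \Diamond_{\leftJoinAtt=\rightJoinAtt} \rightInst$ and assume $t_1[A] = t_2[A]$; the goal is then to show $t_1[B] = t_2[B]$. Next, I would apply the hypothesis $\leftInst \Diamond_{\leftJoinAtt=\rightJoinAtt} \rightInst \models A \to \leftJoinAtt$: by the definition of FD satisfaction, $t_1[A] = t_2[A]$ implies $t_1[\leftJoinAtt] = t_2[\leftJoinAtt]$. Finally, I would apply the second hypothesis $\leftInst \Diamond_{\leftJoinAtt=\rightJoinAtt} \rightInst \models \leftJoinAtt \to B$ to the pair $t_1, t_2$: since $t_1[\leftJoinAtt] = t_2[\leftJoinAtt]$, we conclude $t_1[B] = t_2[B]$. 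Since $t_1, t_2$ were arbitrary tuples of the join result agreeing on $A$, this establishes $\leftInst \Diamond_{\leftJoinAtt=\rightJoinAtt} \rightInst \models A \to B$.

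A couple of sanity checks I would include. First, the disjointness assumptions $A \subseteq atts(\leftInst)\setminus \leftJoinAtt$ and $B \subseteq atts(\rightInst)\setminus \rightJoinAtt$ play no role in the transitivity argument itself; they are inherited from the framing of the enclosing subsection (logically inferable cross-instance FDs) and I would briefly remark that the conclusion in fact holds for any $A, B$ included in the attributes of the join result. Second, I would note that the proof is uniform across all six join operators in $\Diamond$, because the argument only examines pairs of tuples already present in the output instance and never appeals to how that instance was produced; in particular, the presence of null-padded tuples introduced by outer joins is harmless since they are treated as ordinary constants consistently with the convention stated after Theorem~1.

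The main ``obstacle'' is really only cosmetic: there is no combinatorial or structural difficulty here — the result is a three-line consequence of the definition of FD satisfaction. The value of the theorem lies less in the proof and more in its role as a pruning rule in \jedi step~2, where it licenses inference of $A \to B$ from the two single-side-driven FDs without inspecting the join result beyond what is already known. I would therefore keep the proof short and emphasise its algorithmic reading: any such $A \to B$ candidate can be discharged by witnessing the pair $A \to \leftJoinAtt$ and $\leftJoinAtt \to B$, neither of which requires materialising cross-instance tuple comparisons beyond $\leftJoinAtt$.
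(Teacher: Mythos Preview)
Your proposal is correct and matches the paper's own proof, which is a single line: ``This is trivially proved by transitivity, with the use of Armstrong's transitivity axiom.'' Your tuple-level unfolding of transitivity is simply a more explicit rendering of the same argument, and your side remarks (irrelevance of the disjointness hypotheses, uniformity over the six join operators, null handling) are accurate and go a bit beyond what the paper spells out.
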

\begin{proof}
This is trivially proved by transitivity, with the use of Armstrong's transitivity axiom.
\end{proof}
\begin{example}

In ${\small\texttt{PATIENT}} \bowtie_{{\small\texttt{ subject\_id}}}{\small\texttt{ADMISSION}}$ result illustrated in Figure 1, we observe that the diagnosis determines the date of birth, i.e., 
${\small\texttt{diagnosis}} \to {\small\texttt{dob}}.$
The reason is that we have:  
${\small\texttt{admission\_location}}$, ${\small\texttt{diagnosis}} \to {\small\texttt{subject\_id}}$ in ${\small\texttt{ADMISSION}}$  and ${\small\texttt{subject\_id}} \to {\small\texttt{dob}}$ in ${\small\texttt{PATIENT}}$.  
Since these tables do not contain any null values for the \texttt{lhs} and \texttt{rhs} attributes, joining them with attribute ${\small\texttt{subject\_id}}$ leaves these FDs unchanged with no violation. By transitivity, we obtain:
${\small\texttt{diagnosis}} \to {\small\texttt{dob}}.$
\end{example}

\begin{algorithm}[!h]
\SetAlgoLined
\KwInput{$\leftInst$ and $\rightInst$, two instances; \\ 
\hspace{1.1cm}$\leftJoinAtt$ and $\rightJoinAtt$, the sets of join attributes for $\leftInst$ and\\ 
\hspace{1.3cm} $\rightInst$ respectively;\\
\hspace{1.1cm}$\fdSet_\leftInst$ and $\fdSet_\rightInst$, exact FD sets from  $\leftInst$ and $\rightInst$;\\ 
\hspace{1.1cm}$\Diamond \in \{\bowtie;\ltimes;\rtimes;\fullouterjoin;\leftouterjoin;\rightouterjoin\}$ a join operator.}
\KwResult{$\fdSet_{out}$, the FDs inferred over ${\leftInst \Diamond_{\leftJoinAtt=\rightJoinAtt} \rightInst}$}

\SetKwFunction{proc}{infer}
\SetKwFunction{procRef}{refine}
$\fdSet_{inf} \gets$ \proc{$\leftJoinAtt$,$\rightJoinAtt$,$\fdSet_{{\leftInst}}$,$\fdSet_{{\rightInst}}$}\;
$\fdSet_{inf} \gets \fdSet_{inf} \cup$ \proc{$\rightJoinAtt$,$\leftJoinAtt$,$\fdSet_{{\rightInst}}$,$\fdSet_{{\leftInst}}$}\;
\KwRet{\procRef{$\leftInst$,$\rightInst$,$X$,$Y$,$\fdSet_{inf}$, $\Diamond$}}\;
\BlankLine
\setcounter{AlgoLine}{0}
\SetKwProg{myproc}{Subroutine}{}{}
\myproc{\proc{$X$,$Y$,$\fdSet$, $\fdSet^\prime$}}{
$\fdSet_{out} \gets \emptyset$\;
\ForAll{$A \to X$ in $\fdSet$\label{line:transX}}
{
    \ForAll{$Y \to b$ in $\fdSet^\prime$\label{line:transY}}
    {
       add $A \to b$ to $\fdSet_{out}$\;
    }
}
\KwRet $\fdSet_{out}$\;}

\BlankLine
\SetKwProg{myproc}{Subroutine}{}{}
\myproc{\procRef{$\leftInst$,$\rightInst$,$X$,$Y$,$\fdSet_{inf}$, $\Diamond$}}{
Let $\fdSet_{out} \gets \fdSet_{inf}$\;
\ForAll{$A \to b$ in $\fdSet_{inf}$\label{line:fdRef}}
{
    Let $I \gets \pi_{X \cup A}(\leftInst) \Diamond \pi_{Y \cup \{b\}}(\rightInst)$\label{line:instRef}\;
    \ForAll{$A^\prime \subset A$\label{line:fdSub1}}
    {
        \If{$A^\prime \to b$ holds in $I$\label{line:testFdHolds}\label{line:fdSub2}}
        {
        add $A^\prime \to b$ to $\fdSet_{out}$\;
        remove FDs implied by $A^\prime \to b$ from $\fdSet_{out}$\;
        }
    }
}
 \KwRet $\fdSet_{out}$\;}
\caption{Infer Join FDs}
\label{algo:MonoInst_CrossJoinFDs}
\end{algorithm}

To compute the set of FDs with \texttt{lhs} attributes coming from a single instance, as described in Theorem~\ref{thm:MonoInstanceCrossJoinFDs}, we propose Algorithm~\ref{algo:MonoInst_CrossJoinFDs}.
First, the subroutine \subInfer extracts the FDs that can be retrieved by transitivity (lines\#~\ref{line:transX} and~\ref{line:transY} in subroutine \subInfer). Note that in the case of equijoins, equality of values might be enforced between sets of attributes with different names (i.e., $X$ and $Y$ might be different), thus for the general case, the FD (line\#~\ref{line:transY}) cannot be simplified into an FD $X\to b$. At the opposite, if we restrict our join operations to natural joins only, such a simplification can be made.

Then, for each FD returned by \subInfer, 
the subroutine \subRefine checks whether the FD is minimal or if a subset of its \texttt{lhs} leads to a minimal FD.
To do so, subroutine \subRefine uses an horizontal partition of the joined instances in which only the necessary attributes to perform the verification are considered (line\#~\ref{line:instRef}).
These necessary attributes are the join attributes (to perform the join operation), and the \texttt{lhs} and \texttt{rhs} attributes of the refined FD $A \to b$ (line\#~\ref{line:fdRef}) as \subRefine only considers candidates with subsets of $A$ as \texttt{lhs} and $b$ as \texttt{rhs} (lines\#~\ref{line:fdSub1} and \ref{line:fdSub2}).

\subsubsection{FDs with \texttt{lhs} attributes from multiple tables }

Now, we characterize the set of FDs which hold on a join result and such that their \texttt{lhs} attributes come from both initial instances.
We characterize two kinds of FDs with multi-table attributes in \texttt{lhs}: (1) FDs that can be deduced directly using a simple logical reasoning, and (2)  FDs that need to be discovered and validated from the data. 
For example, {\small\texttt{gender, expire\_flag}}$ \to $ {\small\texttt{insurance}} of our  example has attributes from  {\small\texttt{PATIENT}} in \texttt{lhs} and attributes from  {\small\texttt{ADMISSION}} in \texttt{rhs} and it cannot be inferred logically. Other FDs with the same properties are illustrated in orange in Figure ~1. In the following theorem, we show that if \texttt{lhs} attributes of an FD come from the instances participating in the join, then we cannot predict their validity without checking them directly with some representative (if not all) tuples of the join result:
\begin{theorem}
Let $\leftInst$ and $\rightInst$ be two instances over relations $\leftSch$ and $\rightSch$, respectively. Let $\leftInst \Diamond_{\leftJoinAtt=\rightJoinAtt} \rightInst$ be a join result with $\leftJoinAtt \subseteq atts(\leftInst)$, $\rightJoinAtt  \subseteq atts(\rightInst)$. 
We cannot guarantee that all FDs over $\leftInst \Diamond_{\leftJoinAtt=\rightJoinAtt} \rightInst$ can be inferred from Armstrong's axioms over the FDs over $\leftInst$ and $\rightInst$ taken separately.
\end{theorem}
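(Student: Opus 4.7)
The plan is to prove this as a negative (non-implication) result by exhibiting an explicit counterexample, together with the completeness of Armstrong's axiom system. Recall that Armstrong's axioms are sound and complete for FD implication: an FD $d$ is derivable from a set $\Sigma$ of FDs if and only if every relation satisfying $\Sigma$ also satisfies $d$. Hence to prove non-derivability it suffices to exhibit \emph{one} relation that satisfies the premises but violates the conclusion.

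First, I would construct two concrete small instances $\leftInst$ over schema $\leftSch = (\leftJoinAtt, A)$ and $\rightInst$ over schema $\rightSch = (\rightJoinAtt, B, C)$, sharing a single join column, and chosen so that the join $\leftInst \bowtie_{\leftJoinAtt = \rightJoinAtt} \rightInst$ exhibits a multi-table FD $d : \{A, B\} \to C$ whose \texttt{lhs} mixes an attribute from each side. Concretely, I would tune the tuples of $\rightInst$ so that neither $B \to C$ nor $\leftJoinAtt \to C$ holds in $\rightInst$ (ensuring $d$ is genuinely new), tune $\leftInst$ so that $A \not\to \leftJoinAtt$ (ensuring $d$ cannot be derived by transitivity through $\leftJoinAtt$ via Theorem~2), and choose the join attribute values so that all $(A, B)$-pairs occurring in the join happen to determine $C$ uniquely. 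The illustrative example of Section~\ref{sec:example}, e.g., the FD {\small\texttt{gender, h\_expire\_flag}} $\to$ {\small\texttt{insurance}}, already provides the shape of such a witness and can be distilled into a minimal instance with four or five tuples per side.

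Second, I would compute $\fdSet_{\leftInst}$ and $\fdSet_{\rightInst}$ explicitly (which is tractable because the instances are minimal), obtaining $\Sigma = \fdSet_{\leftInst} \cup \fdSet_{\rightInst}$. Then I would exhibit a third relation $I'$ over the union schema $atts(\leftInst) \cup atts(\rightInst)$ that satisfies every FD in $\Sigma$ but contains two tuples agreeing on $\{A, B\}$ and disagreeing on $C$. By the completeness of Armstrong's system, the existence of $I'$ proves that $d \notin \Sigma^{+}$, i.e., $d$ cannot be inferred from $\Sigma$ using reflexivity, augmentation, and transitivity (nor from any sound enrichment by the union, decomposition, and pseudo-transitivity rules, which are derivable from the axioms). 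Since $d$ does hold in $\leftInst \Diamond_{\leftJoinAtt = \rightJoinAtt} \rightInst$, this proves the claim.

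The main obstacle is the design of the counterexample: the instances $\leftInst$ and $\rightInst$ must be crafted so that (i) the target FD $d$ really does hold in the join (which requires carefully coordinating the join-attribute values on both sides), (ii) no sub-FD of $d$ whose \texttt{lhs} lies entirely in one table already holds there (otherwise $d$ would follow trivially by augmentation), and (iii) no intermediate FD $A \to \leftJoinAtt$ or $B \to \leftJoinAtt$ exists that would enable derivation via Theorem~\ref{thm:MonoInstanceCrossJoinFDs}. Once conditions (i)--(iii) are met, the construction of the witness $I'$ is straightforward: take $I'$ to consist of two tuples agreeing on $\{A, B\}$ but differing on $C$, padded with fresh values on $\leftJoinAtt$ so that any single-table FD in $\Sigma$ remains vacuously satisfied. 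This completes the argument.
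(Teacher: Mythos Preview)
Your proposal is correct and follows essentially the same approach as the paper: both prove the statement by exhibiting a concrete counterexample in which a multi-table FD with mixed \texttt{lhs} attributes holds in the join but is not derivable from the union of the single-table FD sets. The paper simply writes down explicit instances $\leftInst$ (over $(\leftJoinAtt,A)$, four tuples) and $\rightInst$ (over $(\rightJoinAtt,A',b)$, four tuples), lists the FDs that hold on each, displays the six-tuple join, and observes that $AA'\to b$ holds there but cannot be inferred; your plan describes the same construction in terms of the properties (i)--(iii) such instances must satisfy, and additionally spells out the completeness step (building a witness $I'$ over the union schema) that the paper leaves implicit. This extra step is a genuine improvement in rigor, but the underlying argument is the same.
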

\begin{proof}
In the two following instances \leftInst and \rightInst, we can see that only the FDs $\rightJoinAtt A^\prime \to b$ and $\rightJoinAtt b \to A^\prime$ hold.
\begin{center}
    \begin{minipage}{0.25\columnwidth}
    \centering
        \begin{tabular}{cc}
            \multicolumn{2}{c}{$\leftInst$}\\
            \hline
            $\leftJoinAtt$ & $A$ \\
            \hline
            0 & 0\\
            1 & 0\\
            1 & 1\\
            2 & 2\\
            & \\
            & \\
        \end{tabular}
    \end{minipage}
    \begin{minipage}{0.25\columnwidth}
    \centering
        \begin{tabular}{ccc}
            \multicolumn{3}{c}{$\rightInst$}\\
            \hline
            $\rightJoinAtt$ & $A^\prime$ & $b$ \\
            \hline
            0 & 0 & 0\\
            1 & 0 & 0\\
            1 & 1 & 1\\
            2 & 1 & 0 \\
               & \\
            & \\
        \end{tabular}
    \end{minipage}
    \begin{minipage}{0.4\columnwidth}
\centering
\begin{tabular}{cccc}
    \multicolumn{4}{c}{${\leftInst \Diamond_{\leftJoinAtt=\rightJoinAtt} \rightInst}$}\\
    \hline
    $\leftJoinAtt=\rightJoinAtt$ & $A$ & $A^\prime$ & $b$\\
    \hline
    0 & 0 & 0 & 0\\
    1 & 0 & 0 & 0\\
    1 & 0 & 1 & 1\\
    1 & 1 & 0 & 0\\
    1 & 1 & 1 & 1\\
    2 & 2 & 1 & 0\\
\end{tabular}
\end{minipage}
\end{center}
In the join result, the FD $AA^\prime \to b$ holds but it cannot be inferred using Armstrong's axioms over the FDs discovered from each instance $\leftInst$ and $\rightInst$.
\end{proof}
This theorem motivates the need for designing a new method for computing FDs from partial join results, as we cannot always infer all the FDs only using logical reasoning. However, we can rely on the following theorem to greatly reduce number of remaining FDs to check from the data:

\begin{theorem}\label{theo6}
 Let $\leftInst$ and $\rightInst$ be two instances over relations $\leftSch$ and $\rightSch$, respectively. 
Let $\leftInst \Diamond_{\leftJoinAtt=\rightJoinAtt} \rightInst$ be a join result with $\leftJoinAtt \subseteq atts(\leftInst)$, $\rightJoinAtt  \subseteq atts(\rightInst)$. 
For all $A  \subseteq atts(\leftInst)$, $A^\prime  \subseteq atts(\rightInst)$ and $b \in atts(\rightInst)$:
If $\leftInst \Diamond_{\leftJoinAtt=\rightJoinAtt} \rightInst \models AA^\prime \to b$, Then $\leftInst \Diamond_{\leftJoinAtt=\rightJoinAtt} \rightInst \models \rightJoinAtt A^\prime \to b$.
\end{theorem}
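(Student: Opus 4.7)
I will argue the contrapositive: assume $\leftInst \Diamond_{\leftJoinAtt=\rightJoinAtt} \rightInst \not\models \rightJoinAtt A^\prime \to b$ and derive that $\leftInst \Diamond_{\leftJoinAtt=\rightJoinAtt} \rightInst \not\models A A^\prime \to b$. The key observation is that $\rightJoinAtt$, $A^\prime$ and $b$ all lie on the right-hand side, so a violation of $\rightJoinAtt A^\prime \to b$ in the join can always be ``promoted'' to a violation of $A A^\prime \to b$ by reusing a single left-hand witness.

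\textbf{Main steps.} First, from the assumed violation extract two join tuples $t_1, t_2 \in \leftInst \Diamond_{\leftJoinAtt=\rightJoinAtt} \rightInst$ with $t_1[\rightJoinAtt A^\prime] = t_2[\rightJoinAtt A^\prime]$ and $t_1[b] \neq t_2[b]$. By the definition of the (natural / equi / outer) join, $t_1$ decomposes into $l_1 \in \leftInst$ and $r_1 \in \rightInst$ (and likewise $t_2$ into $l_2, r_2$) with $l_i[\leftJoinAtt] = r_i[\rightJoinAtt]$. Since $r_1[\rightJoinAtt] = t_1[\rightJoinAtt] = t_2[\rightJoinAtt] = r_2[\rightJoinAtt]$, we have $l_1[\leftJoinAtt] = r_2[\rightJoinAtt]$, so $(l_1, r_2)$ also satisfies the join condition and produces a tuple $t_3 \in \leftInst \Diamond_{\leftJoinAtt=\rightJoinAtt} \rightInst$. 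Now compare $t_1$ and $t_3$: they share $l_1$, hence $t_1[A] = l_1[A] = t_3[A]$; and $t_1[A^\prime] = r_1[A^\prime] = r_2[A^\prime] = t_3[A^\prime]$ because $r_1$ and $r_2$ agree on $A^\prime$. Thus $t_1[A A^\prime] = t_3[A A^\prime]$, while $t_1[b] = r_1[b] \neq r_2[b] = t_3[b]$. This exhibits a violation of $A A^\prime \to b$, contradicting the hypothesis.

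\textbf{Main obstacle.} The swap argument is immediate for natural and equi-joins, so the substantive check is that it still works for the outer variants $\leftouterjoin, \rightouterjoin, \fullouterjoin$. The worry is a dangling tuple padded with nulls: if either $t_1$ or $t_2$ is a padded right-only (or left-only) tuple, then some of $\rightJoinAtt, A^\prime, b$ or $\leftJoinAtt, A$ are $\nu$. Under the convention (fixed earlier in the paper) that all $\nu$'s are treated as one constant, the swap still produces a legitimate join tuple $t_3$ because the equalities $l_1[\leftJoinAtt] = l_2[\leftJoinAtt]$ and $r_1[\rightJoinAtt] = r_2[\rightJoinAtt]$ propagate to the null case; and the non-equality $r_1[b] \neq r_2[b]$ is preserved by construction, so the violation of $A A^\prime \to b$ survives. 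I would therefore split the argument into the two cases (both witnesses inner, at least one witness padded) and verify, using the same swap, that the conclusion holds uniformly for every $\Diamond \in \{\bowtie, \ltimes, \rtimes, \fullouterjoin, \leftouterjoin, \rightouterjoin\}$.
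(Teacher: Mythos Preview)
The paper does not actually supply a proof for Theorem~\ref{theo6}; the statement is asserted and then immediately used to justify Algorithm~3. Your contrapositive swap argument is correct and is precisely the natural extension of the paper's proof sketch for Theorem~\ref{thm:notModelsImplication} (which is the special case $A' = \emptyset$): there too one takes two right-side tuples agreeing on the join key but disagreeing on the target, pairs both with a single left-side witness, and reads off the violation. Your version simply carries the extra $A'$-coordinates along, which is harmless since $A' \subseteq atts(\rightInst)$ and the two $R$-tuples already agree on $A'$ by hypothesis. Your discussion of the outer-join padded case goes beyond anything the paper offers; the sketch for Theorem~\ref{thm:notModelsImplication} handles only the inner-join shape and relies on the blanket remark that nulls may be treated as a single constant.
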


\begin{algorithm}[t]
\SetAlgoLined
\KwInput{$\leftInst$ and $\rightInst$, two relational instances; \\ 
\hspace{1.1cm}$\leftJoinAtt$ and $\rightJoinAtt$, the sets of join attributes for $\leftInst$ and \\ \hspace{1.3cm} $\rightInst$ respectively;  \\
\hspace{1.1cm}$\fdSet_\leftInst$ and $\fdSet_\rightInst$, exact FDs sets from $\leftInst$ and $\rightInst$; \\
\hspace{1.1cm}$\fdSet_{\leftInst\Diamond \rightInst}$, set of FDs inferred from ${\leftInst \Diamond_{\leftJoinAtt=\rightJoinAtt} \rightInst}$;  \\
\hspace{1.1cm}$\Diamond \in \{\bowtie;\ltimes;\rtimes; 
\fullouterjoin;\leftouterjoin;\rightouterjoin\}$ a join operator.}
\KwResult{$\fdSet_{final}$, the final set of FDs from ${\leftInst \Diamond_{\leftJoinAtt=\rightJoinAtt} \rightInst}$}
\SetKwFunction{proc}{discover}
$\fdSet_{out} \gets$ \proc{$\leftInst$,$\rightInst$,$\leftJoinAtt$,$\rightJoinAtt$,$\fdSet_{\rightInst}$,$\fdSet_{\leftInst\Diamond \rightInst}$}\;
$\fdSet_{final} \gets \fdSet_{out} \cup$ \proc{$\rightInst$,$\leftInst$,$\rightJoinAtt$,$\leftJoinAtt$,$\fdSet_{\leftInst}$,$\fdSet_{\leftInst\Diamond \rightInst}$}\;
\KwRet $\fdSet_{final}$
\BlankLine
\SetKwProg{myproc}{Subroutine}{}{}
\myproc{\proc{$I$,$J$,$X$,$Y$,$\fdSet_{J}$,$\fdSet_{\leftInst\Diamond \rightInst}$}}{
$\fdSet_{out} \gets \emptyset$\;
\ForAll{$Y \to b$ in $\fdSet_{{J}}$\label{line:joinAtt}}
{
    \ForAll{$A\subseteq atts(I)\setminus X$ \label{line:joinAttLhs}}
    {
    \If{$\not\exists A^\prime \subseteq A, A^\prime \to b \in \fdSet_{I\Diamond J}$ and $A \to b$ holds in $I \Diamond J$\label{line:joinAttLhsEnd}}
            {
                add $A \to b$ to $\fdSet_{out}$\;
            }
    }
}
\ForAll{$YA^\prime \to b \in \fdSet_{J}$ such that $A^\prime\not\to b$\label{line:joinAttAndOthers}}
    {
        \ForAll{$A \cup A^\prime \to b$ such that $A \subseteq atts(I)\setminus X$\label{line:joinAttAndOthersLhs}}
        {
            \If{$A \cup A^\prime \to b$ holds in $I \Diamond J$\label{line:joinAttAndOthersLhsEnd}}
            {
                add $A \cup A^\prime \to b$ to $\fdSet_{out}$\;
            }
        }
    }
\KwRet $\fdSet_{out}$\;
}
\caption{Discover Join FDs with selective mining}
\label{algo:MultiInst_CrossJoinFDs}
\end{algorithm}

In-line with Theorem~\ref{theo6}, we propose Algorithm 3 for selective mining and use the FDs previously discovered with Algorithms~\ref{alg:filteringFDs} and \ref{algo:MonoInst_CrossJoinFDs} to compute the remaining join FDs. Intuitively, Theorem~\ref{theo6} shows that a given attribute $b$ can be a \texttt{rhs} of a remaining join FDs only if we have previously found an FD of the form $YA \to b$ with $Y$ being the join attributes of the instance containing $b$.
Thus, it allows us to focus only on the plausible \texttt{rhs} (lines\#~\ref{line:joinAtt} and \ref{line:joinAttAndOthers} in subroutine \texttt{discover}) and explore their candidate \texttt{lhs} (lines\#~\ref{line:joinAttLhs}-\ref{line:joinAttLhsEnd} and \ref{line:joinAttAndOthersLhs}-\ref{line:joinAttAndOthersLhsEnd}).  In practice, there is no need to generate every candidate FDs initially. Instead, candidate FDs can be explored by generating a first level containing only the smallest candidates and by generating upper levels only when currently evaluated candidates are not valid. Moreover, we can avoid the computation of the full join by deleting a given \texttt{lhs} attribute $a$ if $a$ is not a possible \texttt{rhs} and, for every FD candidate $d: A\to b$ such that $a\in A$, $d$ is logically implied by previously discovered FDs.

\subsubsection{Sampling-based Discovery of Join FDs}
As the computation of the join FDs by Algorithm~\ref{algo:MultiInst_CrossJoinFDs} can be expensive both in time and space, we propose Algorithm~\ref{algo:statisticalInferenceJoinFDs} which relies on selective sampling. To this extent, Algorithm~\ref{algo:statisticalInferenceJoinFDs} starts with the \texttt{micro\_join} subroutine that compute FDs from micro-joins only between the tuples from $L$ and $R$ that are selected by the \texttt{selective\_sampling} subroutine (lines \#6 and \#19--25).  The set of tuple ids to join is generated in the \texttt{generate\_ids\_set} subroutine  (lines \#23-24, \#26-38). It builds a tree where the depth levels correspond to the order of the attributes  with the fewest number of distinct values as the first level and $n_v$ the last level to consider (line \#29); the nodes correspond to  distinct values per attribute (i.e., level) (line \#33). Then, if an attribute value corresponds to a unique tuple for a level less than $n_v$, the tuple id is selected as the representative tuple of a branch to be considered for the micro-join; if multiple tuples share the same value, $n_b$ tuples will be selected.
Then, for each sample, the instances are joined (line\#~\ref{line:joinPair}) and the FDs over the resulting instance are computed and added to  $\mathcal{E}_{out}$, the set of all computed FDs  (lines\#~\ref{line:computeFDsPairs} and \ref{line:addNewFDsPairs}).
Finally, subroutine \texttt{computeLvLFDs} (line\#~\ref{line:callPartition}) extracts from the set $\mathcal{E}_{out}$ the FDs that are valid in every sample. An FD $d$ can be considered as valid in a set of FD $\fdSet$ if either (1) there exists a logically equivalent FD in $\fdSet$ or (2) there exists an FD logically implying $d$ in $\fdSet$. However, we cannot guarantee that an FD discovered from samples is exact in the joined instance without the full exploration. Indeed, %
for a given FD $d$, in the worst case, %
 the projection of $\leftInst\bowtie\rightInst$ over the attributes in $d$  needs to be explored to check the non-existence of a pair of tuples violating $d$. We refer to \cite{davies1994np} for a formal proof of the NP-completeness of finding minimal FDs.
\begin{algorithm}[h]
\SetAlgoLined
\SetKwFunction{procmj}{micro\_join}
\SetKwFunction{procLvl}{compute\_FDs}
\SetKwFunction{procSelSamp}{selective\_sampling}
\SetKwFunction{procGenList}{generate\_ids\_set}

\KwInput{$\leftInst$ and $\rightInst$, two instances; \\ $\leftJoinAtt$ and $\rightJoinAtt$, the sets of join attributes for $\leftInst$ and 
 $\rightInst$ respectively;\\
$\fdSet_\leftInst$ and $\fdSet_\rightInst$, exact FDs sets of $\leftInst$ and $\rightInst$;  \\
$\Diamond \in \{\bowtie;\ltimes;\rtimes;\fullouterjoin;\leftouterjoin;\rightouterjoin\}$ a join operator;  \\ %
$n_b$ the number of representative tuples to pick per tree level; \\
and 
$n_v$ the number of tree levels to avoid in sample generation; \\ 
\KwResult{$\fdSet_{out}$, the FDs over ${\leftInst \Diamond_{\leftJoinAtt=\rightJoinAtt} \rightInst}$}

\BlankLine
$\mathcal{E}_\fdSet, \mathcal{P} \gets \emptyset$\;
$\mathcal{E}_\fdSet \gets \procmj(\leftInst,\rightInst, \leftJoinAtt, \rightJoinAtt,n_b, n_v)$ \;
$\fdSet_{out} \gets $ \procLvl{$\mathcal{E}_\fdSet$} \label{line:callPartition}\;
\KwRet $\fdSet_{out}$\;

\BlankLine
\SetKwProg{myproc}{Subroutine}{}{}
\myproc{\procmj{$\leftInst$, $\rightInst$, $\leftJoinAtt$, $\rightJoinAtt$, $n_{b}$, $n_{v}$}}{
    $ids \gets $ \texttt{selective\_sampling}($L,R,n_b,n_v$)\;%
    $\mathcal{P} \gets (\sigma_{\leftJoinAtt\in ids}(\leftInst),\sigma_{\rightJoinAtt\in ids}(\rightInst))$\;
   \ForAll{$(\leftInst_i,\rightInst_i) \in \mathcal{P}$}
{
    $J_i \gets \leftInst_i \Diamond \rightInst_i$\label{line:joinPair}\;
    $\fdSet_i \gets$ compute exact FDs over $J_i$\label{line:computeFDsPairs}\;
    {\bf add} $\fdSet_i$ to $\mathcal{E}_\fdSet$\label{line:addNewFDsPairs}\;
}
}

\KwRet $\mathcal{E}_\fdSet$\;

\BlankLine
\SetKwProg{myproc}{Subroutine}{}{}
\myproc{\procLvl{$\{\fdSet_1;\dots;\fdSet_n \}$}}{
$\fdSet_{out}\gets \emptyset$\;

\ForAll{$d \in \bigcup_{i\in [1,n]} \fdSet_i$}
{
    \If{$\forall \fdSet_i \in \{\fdSet_1;\dots;\fdSet_n\}, \exists d^\prime\in\fdSet_i, d^\prime \Rightarrow d$}
    {
        {\bf add} $d$ to $\fdSet_{out}$ \;
    }
}
\KwRet $\fdSet_{out}$\;
}
\BlankLine
\SetKwProg{myproc}{Subroutine}{}{}
\myproc{\procSelSamp{$L$, $R$, $X$, $Y$, $n_b$, $n_v$}}{
$\mathcal{E}_{ids} \gets \pi_{\leftJoinAtt}(\leftInst) \cap \pi_{\rightJoinAtt}(\rightInst)$\;
$\leftInst_{ids} \gets \sigma_{\leftJoinAtt \in \mathcal{E}_{ids}}(\leftInst)$\;
$\rightInst_{ids} \gets \sigma_{\rightJoinAtt \in \mathcal{E}_{ids}}(\rightInst)$\;
$\mathcal{E}_{ids}^\leftInst \gets$ \procGenList{$\leftInst_{ids}$, $\leftJoinAtt$, $n_b$, $n_v$}\;
$\mathcal{E}_{ids}^\rightInst \gets$ \procGenList{$\rightInst_{ids}$, $\rightJoinAtt$, $n_b$, $n_v$}\;}

\KwRet $\pi_\leftJoinAtt(\mathcal{E}_{ids}^\leftInst) \cap \pi_\rightJoinAtt(\mathcal{E}_{ids}^\rightInst)$\;
\BlankLine
\SetKwProg{proc}{Subroutine}{}{}
\proc{\procGenList{$I$, $A$, $n_b$, $n_v$}}{
$ids_{out} \gets \emptyset$\;
$I^\prime \gets I$\;
$nvList \gets$ list the attributes in $atts(I^\prime)\setminus A$ in ascending order of distinct values with size limit $n_v$\; 
        
\ForAll{$a \in nvList$} %
        {
            $I^\prime_a \gets \pi_a(I^\prime)$\;
            \ForAll{$v \in I^\prime_a$}
                {
                    $I^{\prime\prime} \gets \pi_A(\sigma_{a = v}(I^\prime))$\;
                    \uIf{$|I^{\prime\prime}| = 1 $}
                        {
                            add $I^{\prime\prime}$ to $ids_{out}$\;
                        }
                    \Else
                        {
                            add $n_b$ tuples from $I^{\prime\prime}$ to $ids_{out}$
                        }
                }
        }

\KwRet $ids_{out}$\;}

}
\caption{Sampling-based discovery of Join FDs}
\label{algo:statisticalInferenceJoinFDs}
\end{algorithm}

\subsection{The 4 C's}

\subsubsection{Coverage} We observed that the cardinalities and overlap of the join attribute values are rarely preserved through a join operation and this has a great impact on FD discovery depending on the join operator used. Various sets of join FDs can be computed and we summarized the different cases in Figure~\ref{fig:cardinalities} that recaps when \jedi algorithms are  executed. Our example corresponds to the last line and $\bowtie$ column of the table with $(0..N;0..N)$ cardinalities where the join attribute  {\small\texttt{subject\_id}} with value \#257 in  \texttt{PATIENT} has no counterpart in {\small\texttt{ADMISSION}} table; \#247, 248, and \#253 in {\small\texttt{ADMISSION}} have no counterpart in {\small\texttt{PATIENT}} table, and \#252 and \#249 in {\small\texttt{PATIENT}} are present multiple times in {\small\texttt{ADMISSION}} table. To quantify this phenomenon, we define the notion of join coverage and compute it as follows:
\begin{align*}
    Coverage(R\Diamond L) =
        \frac{1}{2}  \bigg( &Cov(R\Diamond L, L, X) + Cov(R\Diamond L, R, Y)\bigg)
    \\
    \text{with }Cov(Join, I, a) &= \frac{1} {|\pi_{a}(I) |} \sum\limits_{\forall v \in \pi_{a}(I)} 
                \frac{|\sigma_{a=v}(Join)) |}
                     {|\sigma_{a=v}(I) |}.
\end{align*}
  $X$ and $Y$ denote the join attributes of $\leftInst$ and $\rightInst$, respectively. $I$ is a considered instance and $a$ the considered join attribute. If $Coverage(R\Diamond L)=0$, no tuple from $L$ can be joined with tuples in $R$. For $Coverage(R\Diamond L)<1$, some tuples in $L$ (or $R$) may be missing from the join result, as it is the case for patients \#257 in {\small\texttt{PATIENT}} and \#247, \#248, and \#253 in {\small\texttt{ADMISSION}} that do not have their counterparts in the other table in our example. For $Coverage(R\Diamond L)=1$, there are as many tuples in both tables $L$ and $R$ as in the join result. For $Coverage(R\Diamond L)>1$, there are more tuples in the join result than in tables $L$ or $R$ as some tuples may be repeated through the join: for example, one patient may have multiple admissions. In Figure~\ref{fig:example}, $Coverage({\small\texttt{PATIENT}}\bowtie{\small\texttt{ADMISSION}})=%
 \frac{1}{2}\cdot\big(\frac{7}{5}+\frac{4}{7}\big)\approx 0.99$.

\subsubsection{Completeness and correctness of \jedi with Selective Mining}

Algorithm 1 either checks if the tuples violating the AFDs are excluded by the join operation or, if AFDs are not provided, it mines the exact FDs after the tuples are filtered by the join operation. Algorithm 2 retrieves the minimal FDs from the logically inferred FDs, thus no \texttt{lhs} subset of FDs remains unchecked. Theorem~\ref{theo6} indicates which part of the candidate FD lattice can be pruned. Then, Algorithm 3 explores the candidate FDs using a classic bottom-up approach, thus no minimal FD remains unchecked. Overall, our algorithms explore the lattice of candidate FDs until they find minimal FDs; they avoid only the parts of the lattice that do not contain valid candidate FDs, thus \jedi with selective mining retrieves the complete set of minimal candidate FDs.

\begin{figure}[h]
	\centering
	\includegraphics[width=\linewidth]{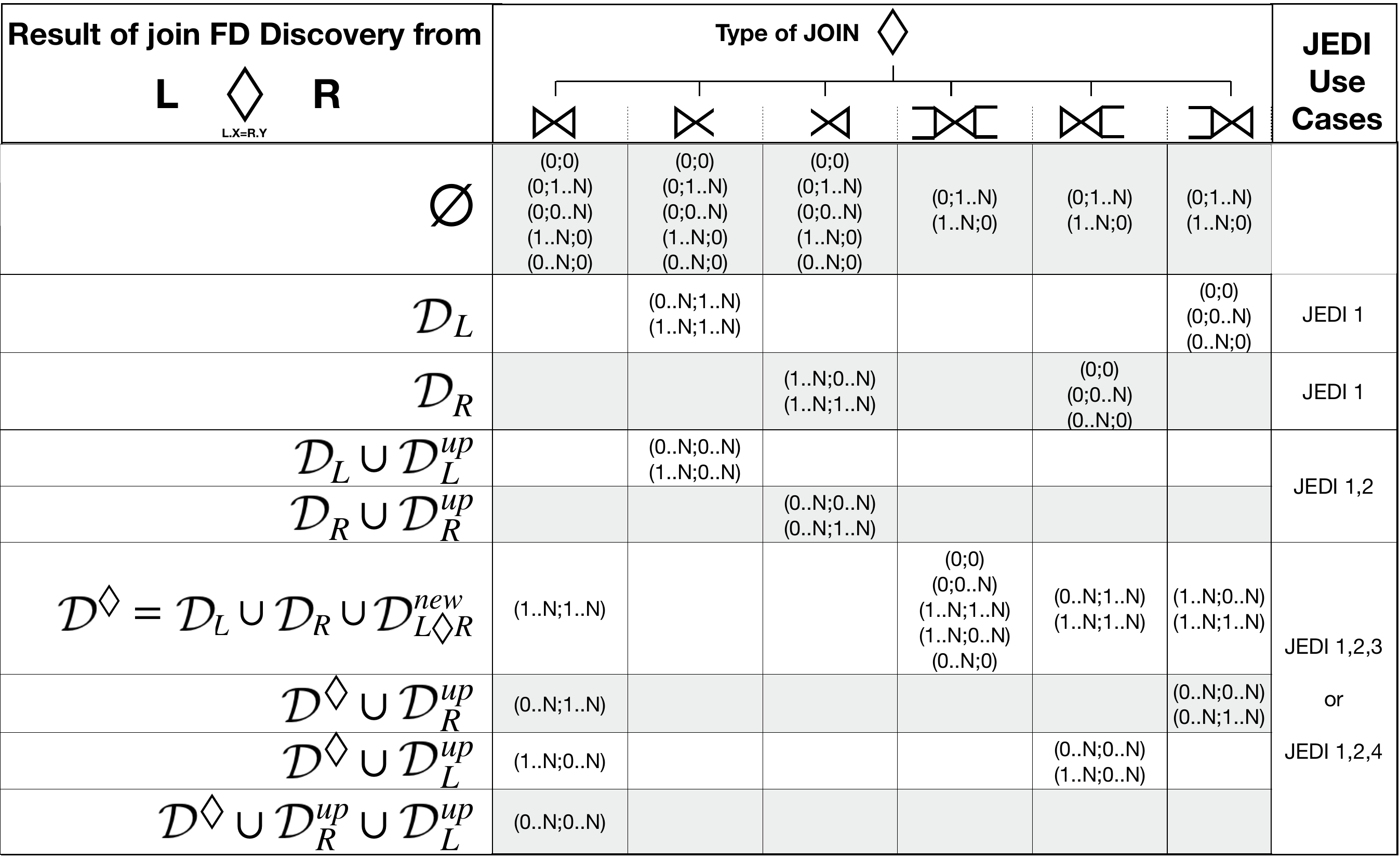}
	\caption{Computation of join FD sets depending on the join type and cardinalities of join attributes in $L\Diamond R$: (0;1..N) means that some tuples are absent in $L$, whereas they are present once or multiple times in $R$; (0..N;1..N) means that some tuples are absent or repeated multiple times in $L$, whereas they are always present in $R$ at least once and multiple times.}\label{fig:cardinalities}
\end{figure} 
In Algorithm 1, valid exact FDs are either discovered from the data or deduced (because the join operation filters the tuples violating some FDs that became consequently exact after the join). Both cases of FDs are guaranteed to hold on the joined instance and only minimal \texttt{lhs} are kept. Theorem~\ref{thm:MonoInstanceCrossJoinFDs} shows the correctness of the set of FDs inferred through logical inference. Then, the subroutine \subRefine checks the correctness of its candidates FDs holding on the data. Therefore, the set of FDs $\fdSet_{2}$ discovered by Algorithm 2 is such that  $\fdSet_{2} \models \fdSet_{\Diamond}$.  In Algorithm 3, Theorem~\ref{theo6} enforces the retrieval of FDs based only on the attributes that can become \texttt{rhs} in the joined instance, then only plausible candidate FDs are explored. Therefore, every discovered FD holds in the joined instance. By construction, Algorithm 2 and 3 lead to the retrieval of FDs with minimal \texttt{lhs} only.

\subsubsection{Completeness and Correctness of \jedi with sampling}
\jedi with sampling uses Algorithm 1 and 2 with the same completeness and correctness guarantees exposed previously for \jedi with selective mining.
 Algorithm~4  uses Algorithm 3 over samples, the retrieved FDs are minimal exact FDs holding on the samples. If there exists a minimal FDs $A\to b$ in a sample, an FD $A^\prime \to b$ such that $A^\prime \subset A$ cannot become a minimal FD in the full instance. Therefore, the FDs retrieves in from the samples imply the set of FDs over the full joined instance.
The correctness of Algorithm~\ref{algo:statisticalInferenceJoinFDs}  cannot be guaranteed as there is a probability of not sampling a pair of counter-example tuples for a given AFD, and thus to confound the AFD with an exact FDs in the join results. Due to the NP-completeness of the minimal FD mining problem\cite{davies1994np}, every pair of tuples must be checked in order to find a possible counter-example, thus to guarantee that the discovered FDs from samples are exact FDs. We would like to note that developing a framework to guarantee the accuracy of the sampling-based FD discovery approach is a  challenging problem and a focus of our future research.

\subsubsection{Complexity}  Algorithm 1 subroutine \texttt{updstagedFDs} is based on a level-wise algorithm through the attributes lattice. Its complexity is exponential in the number of attributes of the considered table. It prunes candidates at each level when it is possible. In terms of memory, only two levels are required. The memory size is bounded by $\mathcal{O}\binom{k}{k/2}$ where $k$ is the number of attributes. The complexity of the second subroutine \texttt{upstagedAFDs} of Algorithm 1 is $\mathcal{O}(n \cdot f)$ where $n$ is the maximal number of tuples and $f$ the maximal number of FDs either from the left or the right table.  The join computation is linear because we use a merge join algorithm over indexed data. It should be noted that subroutine \texttt{upstagedAFDs} does not compute the full join but instead a subset of the full join (Algorithm~\ref{algo:MonoInst_CrossJoinFDs} line\#~\ref{line:instRef}). %
Algorithm 2 infers and refines FDs coming from the previous step with complexity $\mathcal{O}(n \cdot f)$, where $f$ is the number of validated FDs and $n$ the maximal number of tuples in the left or right instance. 
The complexity of Algorithm 3 is $\mathcal{O}(f \cdot f_j)$ where $f$ is the maximal number of validated FDs in the left or right instance and $f_j$ the number of validated FDs in the join instance. %
Algorithm 4 operates on micro-joins computed by {\small\texttt{micro\_join}} subroutine with complexity $|P_i| * ( O(|L_i| \cdot |R_i| ) +  O( |\mathcal{E}_{\mathcal{D}}|)$ with $|P_i|$ the number of tuples in the micro-join $i$, $|L_i|$ and $|R_i|$ the number of tuples in $L$ and $R$ participating to the micro-join, and  
$\mathcal{E}_{\mathcal{D}}$,  the set of consistent FDs discovered from multiple micro-joins FD sets. The complexity of selective sampling is $ O( |L| + |R| )  + O( n_v * k_{max} )$, with $k_{max}$, the maximum number of attributes in $L$ or $R$.

\section{Experiments}

\label{sec:experiments}

{\bf Evaluation Goal.} We compare the two variants of our method which computes only necessary micro-joins on-the-fly against the straightforward approach that consists of computing first the full join of two or more tables and then mine all the FDs from the joined result.  The two main points we seek to validate are: (1) Does our approach enable us to discover join FDs accurately (i.e,. high precision) in an efficient manner and faster than the straightforward approach? (2) What is the impact of different data and joins characteristics on \jedi performance? %

{\bf Setup.} We perform all experiments on a laptop Dell XPS machine with an Intel Core i7-7500U quad-core, 2.8 GHz, 16 GB RAM, powered by Windows 10 64-bit. %
Our algorithm implementation in Java use only one thread. Our code, scripts, and data sets are available at \url{https://github.com/xxxx}. 

{\bf Methods.} We compare \jedi algorithms for selective mining (\jedi\_SM) and sampling-based join FD discovery (\jedi\_SB) against with four state-of-the-art FD discovery methods: (1) TANE \cite{HKPT98,HKPT99},   (2) Fast\_FDs \cite{WGR01}, and (3) FUN \cite{NoCi01icdt}, and (4) HyFD \cite{HyFD}, using Java implementation of Metanome \cite{Papenbrock:2015}. %
Data sets are stored in a Postgres DBMS. Join attributes are indexed with B-Tree and hash indexes. %

\begin{table}[t]
\centering
\scriptsize
\begin{tabular}{clrr}
\hline
{\bf Data set} & {\bf Table}& {\bf (Att\# ; Tup\#) }&{\bf FD\#}\\
\hline
\hline
 &{\bf Patients}& (7 ; 46.52k)& 11 \\
MIMIC-III& {\bf Admissions}& (18 ; 58.976k)& 285\\
&{\bf Diagnoses\_icd}& (4 ; 651.047k)  & 2\\
&{\bf D\_icd\_Diagnoses}& (3 ; 14.710k)& 2\\
\hline
&{\bf pte\_active}& (2; 300)& 1\\
PTE&{\bf pte\_bond}&(4 ; 9.317k)& 3\\
&{\bf pte\_atm}& (5 ; 9.189k)& 5\\
&{\bf pte\_drug}& (1 ; 340)& 0\\
\hline
&{\bf atom}& (3; 12.333k)& 2\\
PTC&{\bf connected}&(3 ; 24.758k)& 3\\
&{\bf bond}& (3 ; 12.379k)& 2\\
&{\bf molecule}& (2 ; 343)& 1\\
\hline
&{\bf Supplier}& (7 ; 10k)& 34 \\
TPC-H&{\bf Customer}&(8 ; 150k)& 51\\
&{\bf Nation}& (4 ; 23)& 9 \\
&{\bf Region}& (3 ; 5)& 6 \\
&{\bf Part}& (7 ; 200k)& 99 \\
&{\bf Partsupp}& (5 ; 800k)&11 \\
\end{tabular}
\caption{Data characteristics for our experiments.
}\label{exp:tables}
\end{table}
\begin{figure}[t]
	\centering
	\includegraphics[height=6cm]{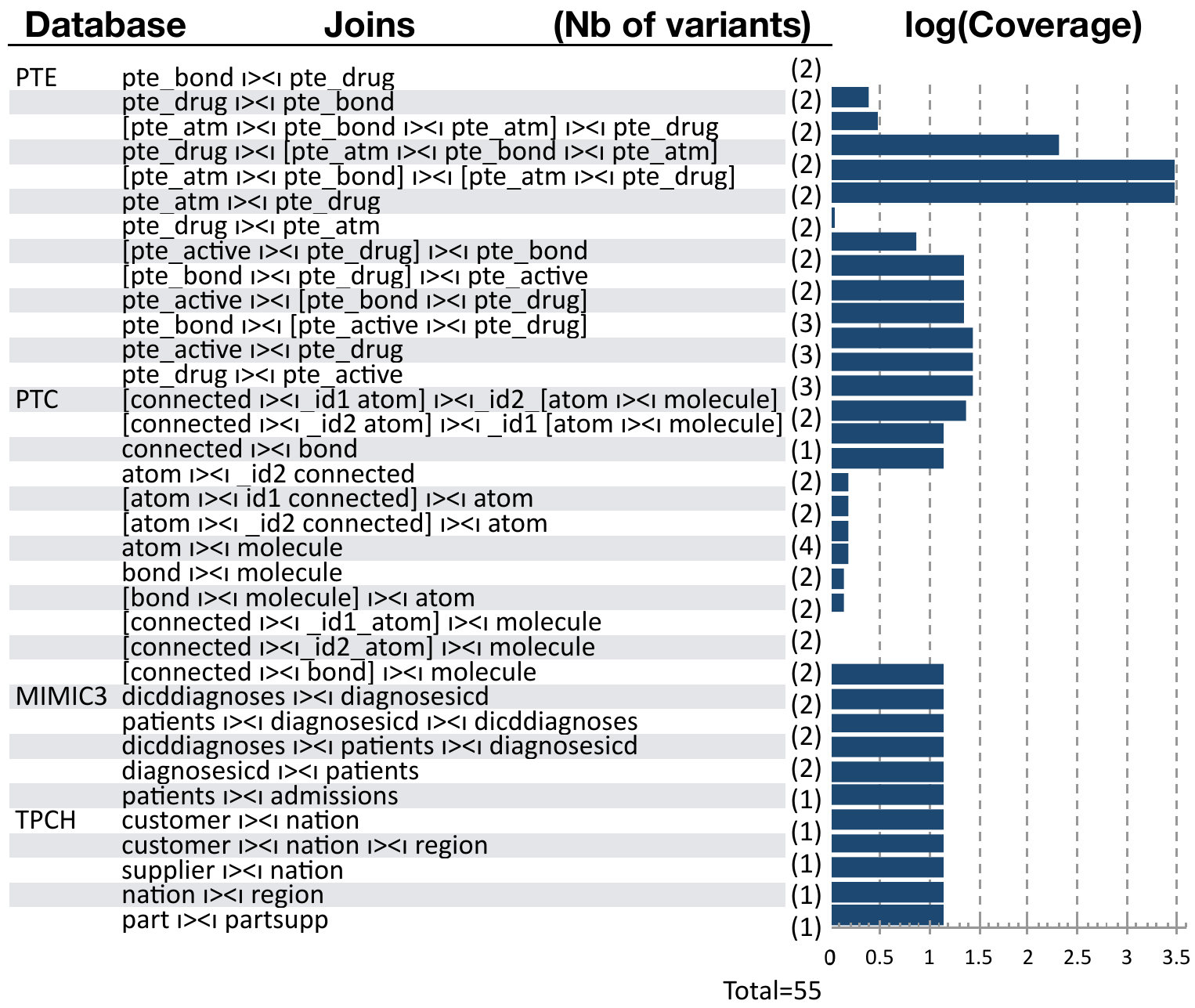}
	\caption{Coverage of the join operations considered in our experiments. The number of variants of each join operation (with different orderings) is indicated in parentheses.}\label{fig-cov-join}
\end{figure} 

{\bf Evaluation Metrics.}
To measure the performances of join FD discovery, we use precision defined as the fraction of correctly discovered FDs on-the-fly  by \jedi from partial joins that are common with the true FDs discovered from the full join result by the straightforward method over the total number of discovered true FDs. %
For each experiment, we consider ten runs per setting and report the average performance of each method. We examine precision, runtime, and memory consumption for different join operators with varying: (1) the size of the input tables and joins, (2) the cardinalities of the join attribute values and join coverage, and (3) the sample size.

{\bf Data.}
We use  three  real-world  datasets  and  one  synthetic data set in our experiments: (1) the clinical database {\tt MIMIC-3}\footnote{\url{https://physionet.org/content/mimiciii/1.4/}} \cite{mimiciii}; 
 (2) PTE\footnote{\url{https://relational.fit.cvut.cz/dataset/PTE}}, a database for predictive toxicology evaluation, used to predict whether the compound is carcinogenic, and (3) PTC\footnote{\url{https://relational.fit.cvut.cz/dataset/PTC}}, the data set from the Predictive Toxicology Challenge that consists of more than three hundreds of organic molecules marked according to their carcinogenicity on male and female mice and rats; and (4) the TPC-H Benchmark\footnote{\url{http://www.tpc.org/tpch/}} with scale-factor 1. 
Data sets characteristics are given in Table \ref{exp:tables} and the characteristics of the join operations in Figure 4.

\begin{figure*}[!t]
	\centering
	\includegraphics[width=\linewidth]{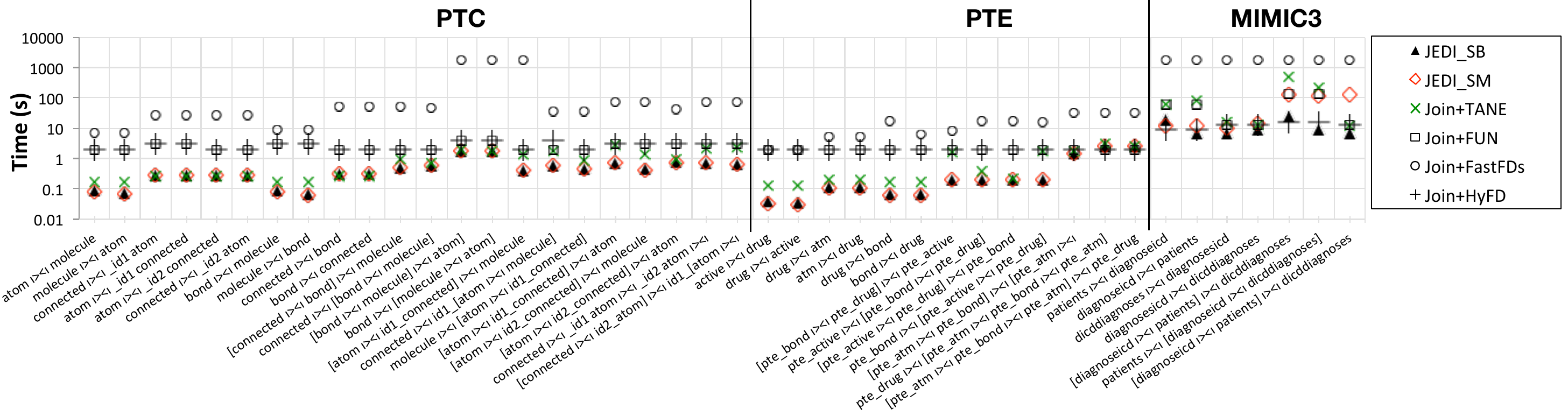}
	\caption{Average runtime of \jedi against FUN, TANE, FastFDs, and HyFD algorithms}\label{fig:jedi-time-comparison}
\end{figure*} 
 \begin{figure*}[t]
	\centering
	\includegraphics[width=\linewidth]{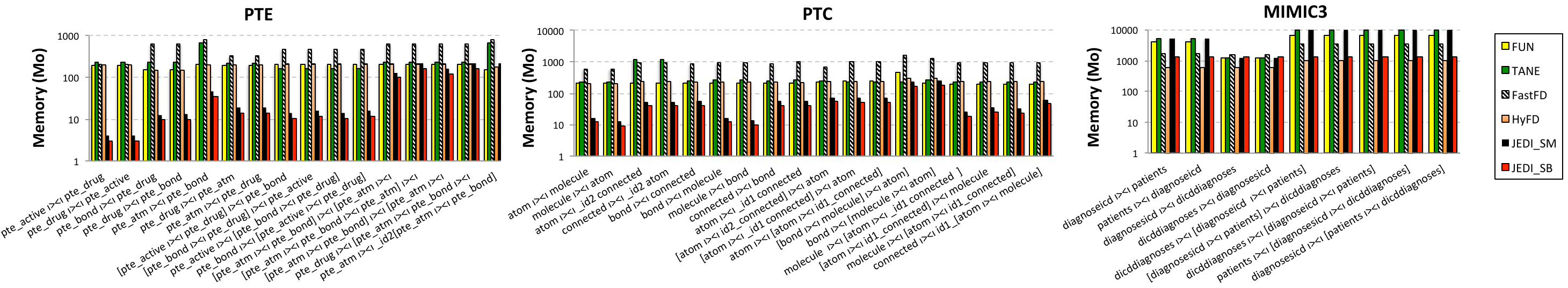}
	\caption{Maximal memory consumption of \jedi against FUN, TANE, HyFD and FastFDs  algorithms}\label{fig:jedi-memory}
\end{figure*} 

\subsection{Efficiency and Effectiveness Evaluation}
In a first set of experiments, we evaluate the runtime and memory consumption of \jedi algorithms compared to the state-of-the-art FD discovery methods that follow the straightforward approach over a total of 55 join operations on the real-world and synthetic data sets with various coverage values illustrated in Figure~\ref{fig-cov-join}. Variants of the same join operation with different orderings of the participating tables have the same coverage.  We want to evaluate the gain of the key components of \jedi for join FD discovery and therefore focus on \jedi Algorithm 1 (join FD discovery from upstaged AFDs), Algorithm 2 (logical inference), and  Algorithm 3 (selective mining of the remaining FDs from the join result) for \jedi\_SM  and  algorihms 1,2, and 4 (selective sampling) for \jedi\_SB. %

\subsubsection{Runtime} Figure~\ref{fig:jedi-time-comparison} shows the average runtime (in seconds) for the two variants of \jedi. %
We log the average total runtime of join FD discovery (including data loading) for all methods over 10 runs. For the competing methods, we  added the average execution time of each join operation over the indexed data.  For PTE and PTC data sets with low coverage, \jedi\_SM is much faster than the other methods operating on the pre-computed full join results with one order of magnitude on average. %
However, for MIMIC3 and joins with higher coverage, \jedi\_SM is not as competitive as HyFD applied to the full join result. This result actually motivated the design of a sampling-based method that can be more efficient when join coverage is high ($\log(Coverage)>1$) in large table sizes (i.e., when many tuples from one table are repeated in the join operation).  \jedi\_SB shows similar or better performances for low coverage joins PTE and PTC with a sampling size around 28\% in average to reach precision of 1. However, when both join coverage and table size increase, \jedi\_SB  needs 63\% of the sample size on average to reach precision 1 as shown for MIMIC3 to compete with HyFD that operates over the precomputed full join. Also, the figure shows that both variants of \jedi are sensitive to the table ordering in the join operation, even more when the join size increases. %

\subsubsection{Memory Consumption} As shown in  Figure~\ref{fig:jedi-memory}, average maximal memory consumption of \jedi\_SM is the lowest for PTE and PTC data sets with low coverage. Memory consumption of \jedi\_SB (when precision is 1) is between 23.4\% and 34.7\% lower than \jedi\_SM on average for PTE and PTC and, which makes it perform better than HyFD in certain cases. However, for MIMIC3 as the number of attributes and the coverage increase, HyFD generally outperforms the other methods with \jedi\_SB has the second position.  %

 \begin{figure*}[t]
	\centering
		\includegraphics[width=.98\linewidth]{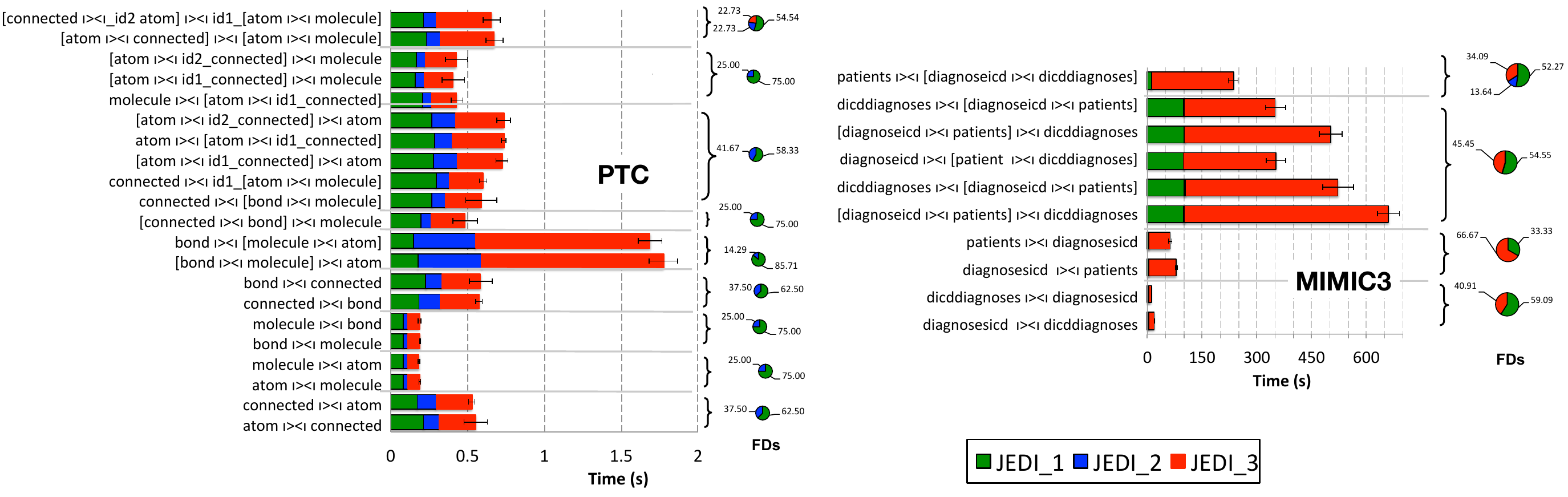}
		
	\caption{Average runtime of \jedi\_SM with breakdown per algorithm}\label{fig:jedi-time-repartition}
\end{figure*} 

\begin{figure*}[ht]
	\centering
	\includegraphics[width=.99\linewidth]{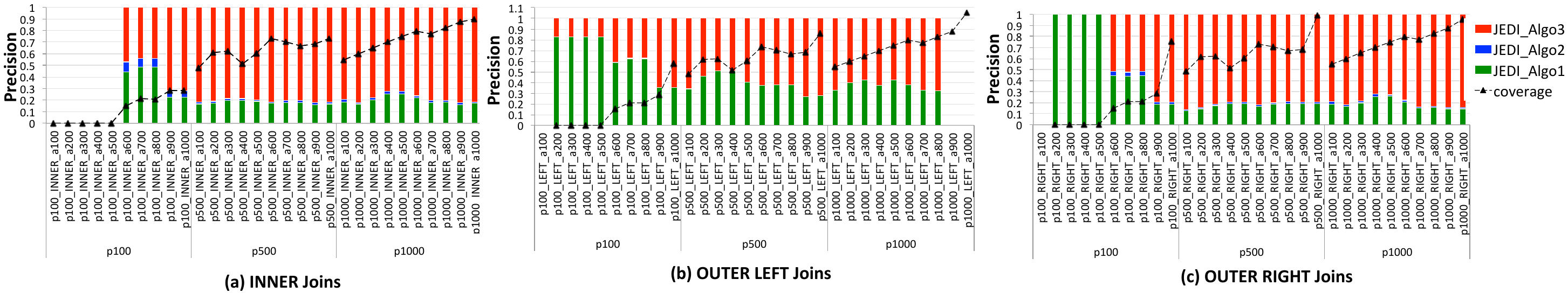}
	\caption{Effect of coverage on the precision of \jedi\_SM algorithms 1,2, and 3 for (a) inner, (b) outer left, and (c) outer right joins between \texttt{PATIENT} and \texttt{ADMISSION} with [100,500,1K] and [100, 200,...,1K] tuples respectively.}\label{fig:inner}
\end{figure*} 
\subsection{Evaluation of \jedi with Selective Mining }
In Figure~\ref{fig:jedi-time-repartition},  we report the average runtime breakdown of each algorithm 1, 2, and 3 in \jedi\_SM (in the horizontal histograms) and their respective percentages of discovered FDs (in the adjacent pie charts with the same color coding). Error bars represent the standard deviation of the average total runtime of the algorithms over ten runs.  The data distributions, cardinalities of the join attribute values, and the join coverage have very different characteristics across the data sets to illustrate the behavior of our algorithms on different join orderings and different coverage values for PTC and MIMIC3. We observe the same behavior in PTE which is not shown due to space limitation. %
Algorithm 1 manipulates different sets of upstaged AFDs per input table, as we can see with the 3-size joins in MIMIC3. The logical inference made by Algorithm 2 from Algorithm 1 output will lead to different resulting sets of inferred FDs. For MIMIC3, \jedi 1 alone retrieves 52\% of the true FDs on average. Algorithm 2 helps only for {\small\texttt{patients}} $\bowtie$ [{\small\texttt{diagnoseicd}} $\bowtie$ {\small\texttt{dicddiagnoses}}], the only case when the logical inference can be triggered from the FDs discovered by Algorithm 1.  

 The figure shows that different orderings in the tables participating in the join operations change not only the runtime but also the contribution of each algorithm in retrieving the true FDs. The reason is that various orderings of the input tables lead to different sets of potential upstaged AFDs and  trigger different logical inferences. Consequently, the remaining set of join FDs to discover (by Algorithm 3 or 4) will be different. Future work will be to find the optimal ordering of the input tables %
 and predict the cases where logical inference could prevail over selective mining to discover FDs more efficiently. %

\subsection{Coverage Analysis}
In this experiment reported in Figure \ref{fig:inner}, we use subsets of MIMIC3 database with 100, 500, and 1000 tuples of {\small\texttt{PATIENT}} joined with 100 to 1000 tuples of  {\small\texttt{ADMISSION}} with (a) inner, (b) outer left, and (c) outer right join operators for {\small\texttt{PATIENT}}$\Diamond${\small\texttt{ADMISSION}}. We evaluate the precision of the three algorithms of \jedi\_SM (Y-axis) as the coverage rate (X-axis) increases across 30 joins for each join operator (90 joins). Coverage increases with the size of the \texttt{rhs} table participating in the join operation which indicates cardinalities $(0..N;0..N)$ as more and more tuples are repeated through the join operation.  Precision of Algorithm  1 decreases as the coverage and sizes of both \texttt{lhs} and \texttt{rhs} tables of the join increase for the three types of join showing the difficulty of leveraging AFDs when the number of repetitions increases. 
 \jedi 1 precision is relatively low for inner and outer right joins with best values (around .45) when coverage is below .5. Precision of  Algorithm  2 is close to 0 because logical inference cannot be leveraged from Algorithm 1's outputs, when the coverage and sizes of the joined table increase. We observed that Algorithms 1 and 2 tend to be more effective when the coverage is relatively low as shown in Figure \ref{fig:inner} for the three types of join operations.

\subsection{Evaluation of \jedi with Sampling }
In this last set of experiments, we evaluate \jedi 1,2,4 and show that it outperforms \jedi\_SM and other competing methods in terms of runtime for precision equals to 1, in particular for the worst case scenarios when coverage is high. First, we investigate the effect of the sample size on the precision of \jedi Algorithms 1,2, and 4. 
\begin{wrapfigure}{l}{0.25\textwidth}
	\centering
	\includegraphics[width=\linewidth]{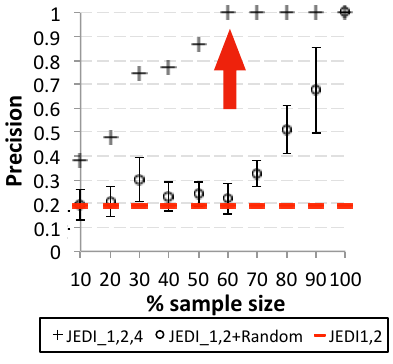}
	\caption{Effect of sampling size on \jedi\_SB precision for  {\small\texttt{PATIENT}\_{{\small 1K}}} $\bowtie$ {\small\texttt{ADMISSION}\_{{\small 1K}}}  }\label{fig:sample-size}
\end{wrapfigure}
\noindent Figure \ref{fig:sample-size} shows precision of selective sampling with \jedi 1,2,4 compared to \jedi 1,2 followed by random sampling (averaged over 10 runs) for the inner join between 1,000 tuples of  {\small\texttt{PATIENT}}  and 1,000 tuples of {\small\texttt{ADMISSION}} with high \mbox{coverage (>1)} (cf. Figure \ref{fig:inner}(a)).  For this experiment, we used $n_b=1$ as input parameter for selective sampling to pick only one tuple as the representative tuple of a branch and we vary $n_v$ from 0 to $| atts({\small\texttt{PATIENT}}\bowtie{\small\texttt{ADMISSION}})|$ to exclude the attributes with the highest number of  distinct values. 
\jedi 1,2,4 reaches precision (and recall) of 1 when the size of the selected sample is 62.3\% of the table input size. Interestingly, even for a sample of size of 10\% with tuples selected by Algorithm 4, we can see an improvement of +0.153 over \jedi 1 and 2 precision.

\begin{table*}[t]
\scriptsize
\centering
\begin{tabular}{|c|l|p{.8cm}|r|r|r||r|r|r|p{1.3cm}|}
\cline{5-10}
\multicolumn{4}{c|}{}&\multicolumn{6}{c|}{{\bf Average Runtime (seconds)}}\\
\hline
{\bf Database}&{\bf Join}&{\bf Coverage}&{\bf Join Size Ratio}&{\bf JEDI\_SM}&{\bf JEDI\_SB}&{\bf JOIN+HyFD} &{\bf JOIN+FUN}&{\bf JOIN+TANE}&{\bf JOIN+FastFDs}\\
\hline

MIMIC3&diagnoseicd $\bowtie$ patients&\multirow{2}{*}{7.50}&429,530/651,047 (66\%) &79.86$\pm$2.33& 12.73 $\pm$ 0.85&\multirow{2}{*}{{\bf 10.23$\pm$0.53}}&\multirow{2}{*}{61.23$\pm$3.35}&80.14$\pm$4.07&\multirow{7}{*}{>1800}\\
\cline{5-6} \cline{9-9} 

&patients $\bowtie$ diagnoseicd&&422,120/651,047 (65\%)&62.35$\pm$3.16& {\bf 10.12 $\pm$ 0.34}&&&62.78$\pm$3.26&\\
\cline{2-9} 

&dicddiagnoses $\bowtie$ diagnosesicd&\multirow{4}{*}{22.84}&430,355/634,709	(68\%)&10.72$\pm$0.42&{\bf 7.30} $\pm$ 0.94&\multirow{2}{*}{10.32$\pm$0.51}&\multirow{2}{*}{13.21$\pm$0.15}&17.98$\pm$2.34&\\
\cline{4-4}\cline{5-6} \cline{9-9}
&diagnosesicd $\bowtie$ dicddiagnoses&&400,314/634,709	(63\%)	&19.25$\pm$0.66&{\bf 10,17} $\pm$ 1.58&&&18.41$\pm$ 0.43&\\
\cline{4-4} \cline{5-6} \cline{7-9}
&[diagnoseicd $\bowtie$ patients] $\bowtie$ dicddiagnoses&&410,452/634,709	(65\%)&472.57$\pm$31.64&{\bf 15.64 $\pm$	1.73}&\multirow{2}{*}{\bf 16.32$\pm$1.33}&\multirow{2}{*}{ 140.07$\pm$12.15}&473.85$\pm$27.23&\\
\cline{4-4}\cline{5-6} \cline{9-9} 
&patients $\bowtie$ [diagnoseicd $\bowtie$ dicddiagnoses]&&424,273/634,709	(67\%)&220.67$\pm$12.88&17.55 $\pm$	1.18&	&&221.44$\pm$10.98&\\
\cline{4-7}
\hline\hline
TPCH&part $\bowtie$ partsupp&2.5&525,752/800K (22\%)&721.08$\pm$47.41 &{\bf 3.37 $\pm$ 0.52}&3.52$\pm$0.54 &149.07$\pm$15.25&\multicolumn{2}{c|}{>2100}\\
\cline{2-10}
&nation  $\bowtie$ region&3&17/25	(65\%)& 14.36$\pm$0.41&{\bf 3.07 $\pm$ 0.17}& 3.32$\pm$0.51  &3.03$\pm$0.01&{\bf 3.07$\pm$0.12}&3.39$\pm$0.17\\
\cline{2-10}
&supplier $\bowtie$ nation&200.5&6,524/10K	(66\%)	&	3.06$\pm$0.01&   {\bf 3.02 $\pm$ 0.28}&3.62$\pm$0.57  &3.06$\pm$0.21&3.33$\pm$0.27&3.43$\pm$0.27\\
\cline{2-10}
&customer $\bowtie$  nation&\multirow{2}{*}{3000.5}&101,076/150K	(45\%)&10.79$\pm$0.18&3.49	$\pm$ 0.13& {\bf 3.38$\pm$0.52} & 3.50$\pm$0.65&3.44$\pm$0.33&3.64$\pm$0.23\\
\cline{2-2} \cline{4-10}
&customer $\bowtie$ nation  $\bowtie$ region&&102,162/150K (77\%)&13.59	$\pm$ 0.69&{\bf  3.55$\pm$0.25}& 45.52$\pm$1.44  & 3.56$\pm$0.85&3.72$\pm$0.24&3.75$\pm$0.14\\
\hline
\end{tabular}
\caption{Average runtime comparison on MIMIC3 and TPC-H data.}\label{tab:exp-jedi4}
\end{table*}

This shows the main advantage of our approach, reducing drastically the execution time of the join FD discovery.
Table \ref{tab:exp-jedi4}  completes the comparative study of the two variants of \jedi against the competing methods for the worst case scenarios when join coverage is high for large table size. Recall that \jedi\_SM has shown the best performances over the state-of-the-art methods for PTE and PTC joins with relatively low join coverage and small table sizes (from 2 to 5 attributes and 300 to 24k tuples). So, we focus on MIMIC3 and TPC-H joins and report for each join: its coverage, the Join size ratio (expressed as  the sample size over the full join size, and as a percentage in parentheses, and the average runtime of \jedi\_SM, \jedi\_SB  at precision 1, FUN, TANE, HyFD, and FastFDs. 

For MIMIC3, we observe that \jedi\_SB outperforms all the methods with at least one of the variants of the join ordering, otherwise HyFD applied to the full join result is slightly faster. This shows that the ordering of large tables is critical for size 3 and more joins but most importantly, high coverage impacts the sampling strategy. The reason is that many tuples from each input table are repeated through the join operation and the choice of more than one representative tuples jeopardizes the sampling result, in particular when it does not preserve the balance and distributions of the repeated tuples from the full join to the multiple micro-joins. 

For TPC-H, we observe the same phenomenon for the 3-size join  where the straightforward approach with HyFD is slighlty more efficient (or equal with TANE). We observe that the join size ratio to reach precision 1 is influenced by the coverage and also the number of distinct values for each attributes of the joined tables. For example, for the join {\small \texttt{part}}$\bowtie {\small \texttt{partsupp}}$, the average number of distinct values per attribute is 52,984 for {\small \texttt{part}}  and 223,971  for {\small \texttt{partsub}} which facilitates the sampling strategy in selecting violating tuples and reducing the space of candidates FDs. This can explain the relatively low ratio (22\%) despite high join  coverage.

These observations show that there is room for improvement  in the two following directions: (1) to find the optimal ordering of the tables to maximize the performance of FD discovery (in particular in Algorithm 1) and and (2) to improve the sampling strategy and better tune the sample size $n_b$ to maximize the number of violations and reduce the space of candidate join FDs.
\section{Related Work} 
\label{sec:relatedwork}
In the last three decades, numerous approaches from the database and the data mining communities have been proposed to extract automatically valid exact  and approximate FDs from single relational tables~\cite{KiMa95,caruccio2015relaxed}. Liu et al. \cite{Liu2012} have shown that the complexity of FD discovery is in $O(n^2(\frac{k}{2})^22^k)$ where $k$ is the number of attributes and $n$ the number of records considered.   %
 To find FDs efficiently, existing approaches can be classified into three categories:~(1) Tuple-oriented methods (e.g., FastFDs~\cite{WGR01}, DepMiner~\cite{lopes2000efficient}) that exploit the notion of tuples agreeing on the same values to determine the combinations of attributes of an FD; (2) Attribute-oriented methods (e.g., Tane~\cite{HKPT98, HKPT99}, Fun~\cite{NoCi01icdt, NoCi01is}, FDMine~\cite{YH08}) that  use pruning techniques and reduce the search space to the necessary set of attributes of the relation to discover exact and approximate FDs.    HyFD~\cite{PapenbrockN16} exploits simultaneously the tuple- and attribute-oriented approaches to outperform the previous approaches; and more recently 
(3) Structure learning methods relying on sparse regression \cite{Zhang2020}, or on entropy-based measures \cite{kenig2019mining} to score candidate constraints (not limited to FDs alone). More particularly, FDX \cite{Zhang2020}  performs structure learning over a sample constructed by taking the value differences over sampled pairs of tuples from the raw data. %
In addition, incremental approaches~\cite{SchirmerP0NHMN19, CaruccioCDP19} have been developed to tackle data volume and velocity with updating all valid FDs when new tuples are inserted outperforming classical approaches that recalculate all FDs after each data update. Extensive evaluations of FD discovery algorithms can be found in \cite{DurschSWFFSBHJP19,PapenbrockEMNRZ15}. To the best of our knowledge, previous work on FD discovery did not attempt to address the problem join FD discovery in an efficient manner. Our approach combining logical inference, partitioning,  and selective sampling is the first solution in this direction.

\section{Conclusions}
\label{sec:conclusion}
We introduced {\textsf{JEDI}}, a framework to solve the problem of FD discovery from multiple joined tables without the full computation of the join result beforehand. The salient features of our work are the following: (1) We leverage single-table approximate FDs that become exact join FDs due to the join operation when the join value sets are not preserved; (2) We leverage logical inference to discover join FDs from the sets of single-table FDs without computing the full join result; and (3) We find new multi-table join FDs from partial join and micro-joins using respectively selective mining on the necessary attributes and selective sampling on the necessary tuples. We empirically show that \jedi outperforms,  both in terms of runtime and memory consumption, the state-of-the-art FD discovery methods applied to the join results that have to be computed beforehand.
We hope that this will open a new line of research where the community will examine FDs across multiple tables and an entire database. Since join FDs  are resilient to the join operators, they can be leveraged for pruning  and assessing the validity and prevalence of a large set of discovered single-table FDs.

\newpage

\bibliographystyle{abbrv}
\bibliography{biblio.bib} 

\end{document}